\documentclass[11pt]{article}
\usepackage{fullpage}
\usepackage{url}
\usepackage[colorlinks=true,linkcolor=Emerald,citecolor=RoyalBlue]{hyperref}
\usepackage{amsmath,nccmath, amsfonts,amsthm,multirow,mdframed,amssymb}
\usepackage{times}
\usepackage{paralist}
\usepackage{graphicx}
\usepackage{floatpag}
\usepackage{float}
\usepackage{bbold}
\usepackage{enumitem}
\usepackage{subcaption} 
\usepackage{soul}
\usepackage{comment}
\usepackage{calc}
\usepackage{upgreek}
\usepackage{mathtools}
\usepackage{color}
\usepackage[dvipsnames]{xcolor}
\usepackage{xspace}
\usepackage{tcolorbox}
\usepackage{cleveref}
\usepackage{stackengine}
\stackMath
\allowdisplaybreaks


\def \N {\mathbb{N}}

\def \cD {\mathcal{D}}
\def \cE {\mathcal{E}}

\def \cP {\mathcal{P}}
\def \cR {\mathcal{R}}
\def \cS {\mathcal{S}}

\def \tO {\widetilde{O}}

\def \poly {\mathrm{poly}}

\def \eps {{\varepsilon}}

\def\ggg{\gtrsim}
\def\lll{\lesssim}

\def \deg {\text{deg}}

\def \out {\textsc{out}}
\def \In {\textsc{in}}
\def \pl {\sf{polylog}}
\def \pll {\text{polyloglog}}

\renewcommand{\leq}{\leqslant}

\renewcommand{\geq}{\geqslant}
\renewcommand{\ge}{\geqslant}

\usepackage{tikz}
\newcommand\rcirc{\tikz[baseline=(char.base)]{
		\node[shape=circle,draw,inner sep=0.8pt] (char) {\small r};}}

\newtheorem{theorem}{Theorem}[section]

\newtheorem{proposition}[theorem]{Proposition}

\newtheorem{corollary}[theorem]{Corollary}

\newtheorem{lemma}[theorem]{Lemma}
\newtheorem*{lemma*}{Lemma}
\newtheorem*{theorem*}{Theorem}
\newtheorem{claim}[theorem]{Claim}
\newtheorem*{claim*}{Claim}

\newtheorem{remark}[theorem]{Remark}
\newtheorem*{remark*}{Remark}
\newtheorem{definition}[theorem]{Definition}
\theoremstyle{definition}
\newtheorem{algo}{Algorithm}
\newtheorem{protocol}{Protocol}
\newtheorem{agr-test}{Agreement-Test}
\newtheorem{list-agr-test}{List-Agreement-Test}

\makeatletter
\let\c@fconjecture\c@conjecture
\makeatother

\makeatletter
\let\c@fconj\c@conj
\makeatother

\title{Constant Degree Networks for \\ Almost-Everywhere Reliable Transmission}

\author{Mitali Bafna \thanks{Department of Mathematics, Massachusetts Institute of Technology.}
	\and
	Dor Minzer\thanks{Department of Mathematics, Massachusetts Institute of Technology. Supported by NSF CCF award 2227876 and NSF CAREER award 2239160.}
}

\date{\vspace{-5ex}}
\begin{document}
	\maketitle
	
	\begin{abstract}
		In the almost-everywhere reliable message transmission problem, introduced in \cite{Dwork}, the goal is to design a sparse communication network $G$ that supports efficient, fault-tolerant protocols for interactions between all node pairs. By fault-tolerant, we mean that that even if an adversary corrupts a small fraction of vertices in $G$, then all but a small fraction of vertices can still communicate perfectly via the constructed protocols. Being successful to do so allows one to simulate, on a sparse graph, any fault-tolerant distributed computing task and secure multi-party computation protocols built for a complete network, with only minimal overhead in efficiency. Previous works on this problem \cite{Dwork, Upfal, Chandran, JRV, BMV24} achieved either constant-degree networks tolerating $o(1)$ faults, constant-degree networks tolerating a constant fraction of faults via inefficient protocols (exponential work complexity), or poly-logarithmic degree networks tolerating a constant fraction of faults. 
		
		We show a construction of constant-degree networks with efficient protocols (i.e., with polylogarithmic work complexity) that can tolerate a constant fraction of adversarial faults, thus solving the main open problem of~\cite{Dwork}. Our main contribution is a composition technique for communication networks, based on graph products. Our technique combines two networks tolerant to adversarial \emph{edge-faults} to construct a network with a smaller degree while maintaining efficiency and fault-tolerance. We apply this composition result multiple times, using the polylogarithmic-degree edge-fault tolerant networks constructed in~\cite{BMV24} (that are based on high-dimensional expanders) with itself, and then with the constant-degree networks (albeit with inefficient protocols) of~\cite{Upfal}.
	\end{abstract}

	\section{Introduction}
Many real-world applications involve computations on inputs that might be distributed across many machines in a large network. This need has led to the development of protocols for important distributed tasks like Byzantine agreement~\cite{LamportSP82}, collective coin flipping, poker and more generally, for secure multiparty computation, which also ensures privacy in addition to correct computation. In fact, this culminated in the completeness theorems of \cite{Ben-OrGW88, ChaumCD88} showing that any joint function can be computed even with a constant fraction of Byzantine parties--those whose behavior might deviate arbitrarily from the protocol--while ensuring correctness and privacy.

Most of these protocols assume that each machine can directly communicate with every other machine in the network. However, such assumptions are impractical for modern large-scale networks, which are often sparsely connected. To address this, the seminal work of Dwork, Peleg, Pippenger, and Upfal \cite{Dwork} studied the question of designing sparse networks that are resilient to Byzantine node failures. Their goal was to design a sparse network $G$ of degree $d$, on $n$ nodes, where honest nodes can still communicate and execute protocols even if $t$ nodes are adversarially corrupted. Since $t$ might be much larger than $d$, some honest nodes may become isolated if all of their neighbors are corrupted. Therefore, Dwork et al.~allow $x$ (possibly larger than $t$) nodes to become ``doomed'', requiring only the remaining $n - x$ nodes to successfully participate in all protocols between them. 

They proposed the almost-everywhere reliable message transmission problem as a way to simulate any fault-tolerant protocol on the complete network. The problem asks one to construct an $n$-vertex sparse communication network $G = (V,E)$, along with a set of efficient communication protocols $\cR=\{R(u,v)\}_{u,v\in G}$ between pairs of vertices of $G$ that are fault-tolerant. Formally, we are interested in the following parameters of $G$ and $\cR$:
\begin{enumerate}
\item Sparsity: The degree of $G$. Ideally, we would like it to be an absolute constant, independent of $n$.
\item Round complexity: The number of rounds of communication, where in each round every vertex can send and receive messages from its neighbors in $G$. We remark that trivially, the number of rounds is at least the diameter of the graph, and in particular it is at least $\Omega(\log n)$ if $G$ has a constant degree.
\item Work complexity: The work complexity of a protocol is the total computation performed by all vertices to implement it. The work complexity of $\cR$ is the maximum over the work complexity of the protocols $R(u,v)$. Ideally, we would like this to be $\pl n$.
\item $(\eps,\nu)$-Fault-Tolerance: If an adversary corrupts any $\eps$-fraction of the nodes of $G$ then all but $\nu n$ nodes, referred to as doomed nodes, can communicate perfectly with each other using the protocols in $\cR$. Ideally, we would like to be able to take $\eps$ constant bounded away from $0$, and $\nu$ as a vanishing function of $\eps$, say $\nu = \eps^{\Omega(1)}$.
\end{enumerate}
The work~\cite{Dwork} gave a construction of constant degree networks with protocols that have work complexity $\pl(n)$ and tolerance parameters $\eps = 1/\log n$, $\nu = \Omega(1)$. They also showed how to simulate any protocol built for the complete graph (that is tolerant to $\nu n$-fraction of vertex corruptions), using communication on the edges of the sparse graph $G$. The resulting protocol on $G$ is tolerant to $\eps n$ adversarial corruptions, and ensures that all but $\nu n$-fraction of the parties compute the desired output. In~\cite{GarayO08, ChandranFGOPZ22} this was extended to secure multiparty computation (MPC), in that, given $G$ as above, they showed how to construct a related sparse network that supports almost-everywhere secure MPC. Following~\cite{Dwork} the works~\cite{Upfal, Chandran, chandran2012edge, JRV, BMV24} improved parameters for the a.e.~transmission problem. While these works obtained optimality in a few parameters, they did not achieve all of the ideal parameters simultaneously. Our main result is a sparse network achieving it: our network has constant degree, and our communication protocols have polylogarithmic round and work complexity and constant fault tolerance. This resolves the main open problem of~\cite{Dwork}. 

At the heart of our proof is a composition technique for communication protocols, reminiscent of the composition technique from the theory of probabilistically checkable proofs (PCPs)~\cite{FGLSS,AroraSafra,ALMSS} and of expander graphs~\cite{ReingoldVW00}. Our composition result (\Cref{lem:composition}) shows how to compose two networks that are tolerant to \emph{edge-faults} in the network and obtain a new network with smaller degree, while maintaining the tolerance and work-complexity. Here and throughout, the edge-fault model is the model in which the adversary is allowed to corrupt an arbitrary $\eps$-fraction of the edges of the graph, 
and the $(\eps,\nu)$-fault-tolerant requirement is  
that at most $\nu$ fraction of the vertices are doomed.\footnote{Note that in the edge-fault model, an adversary can corrupt any small fraction of edges, thus it is stronger than the adversary in the adversary in the vertex corruption model. Indeed, to mimic a vertex corruption, the edge-fault adversary can choose to control all edges adjacent to a vertex.}

Interestingly, we do not know whether an analog of our composition statement holds if the networks $G$ and $H$ are only promised to be tolerant against vertex corruptions. While the two models are qualitatively the same on constant degree graphs (up to constant factors), results in the edge-fault model are substantially more powerful for graphs of super constant degree. 
For instance, the prior state-of-the-art results in literature~\cite{Chandran, JRV} constructed (poly)logarithmic-degree graphs that are tolerant against constant-fraction of vertex-faults, but are not tolerant against a constant-fraction of edge-faults. The work~\cite{BMV24} is the first work that constructs polylogarithmic degree networks with routing protocols that are efficient and tolerant against constant fraction of edge corruptions.

We instantiate our composition theorem by composing the network  of~\cite{BMV24} with itself a few times, until the degree becomes sufficiently small (but still super constant). Then, we compose the resulting network with the constant degree networks of~\cite{Upfal} to reduce the degree to a constant, while maintaining the constant tolerance and polylogarithmic work complexity.

\subsection{Our Results}\label{sec:results}
We now state our main result formally.
\begin{theorem}\label{cor:main}
There exists $D\in \N$ such that for all $\eps>0$ and large enough $n$, there exists a $D$-regular graph $G$ with $\Theta(n)$ vertices and a set of protocols $\cR=\{R(u,v)\}_{u,v\in G}$ between all pairs of vertices in $G$, with work complexity $\pl n$ and round complexity $\tO(\log n)$, such that if at most $\eps$-fraction of edges are corrupted, then at most $\poly(\eps)$-fraction of vertices in $G$ are doomed. Furthermore, there is a polynomial time deterministic algorithm to compute $G$, and a randomized algorithm to construct the protocols $\cR$, which is guaranteed to satisfy the above tolerance guarantees with probability $1-\exp(-n\pl n)$. 
\end{theorem}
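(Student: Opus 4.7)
The plan is to iteratively apply the composition theorem (\Cref{lem:composition}) a constant number of times, starting from the polylogarithmic-degree edge-fault tolerant network of~\cite{BMV24} as the outer building block, and finishing with the constant-degree (but computationally inefficient) network of~\cite{Upfal} as the final inner block. In spirit this mirrors PCP composition: each step pairs a large ``outer'' network of moderate degree with a small ``inner'' network whose role is to simulate the outer's local neighborhoods using fewer edges, trading the outer's degree for a smaller degree inherited from the inner.

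Concretely, let $N_0$ denote the network of~\cite{BMV24} on $n$ vertices, with degree $d_0 = \pl n$, round complexity $\tO(\log n)$, work complexity $\pl n$, and constant edge-fault tolerance. Given $N_i$ of degree $d_i$, the next step composes $N_i$ (as outer) with a fresh copy of BMV24 on $\Theta(d_i)$ vertices (as inner), producing $N_{i+1}$ of degree roughly $\pl d_i$. After a constant number $k$ of such self-compositions, the degree is driven down to $d_k = O(\log \log n)$ (or any slowly growing function one prefers). One then applies \Cref{lem:composition} once more, using~\cite{Upfal}'s constant-degree network on $d_k$ vertices as the inner block; although Upfal's protocols have work complexity $2^{O(d_k)}$, the choice $d_k = O(\log \log n)$ makes this at most $\pl n$, which is acceptable. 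The resulting network has constant degree, $\pl n$ work complexity, $\tO(\log n)$ round complexity, and constant edge-fault tolerance. Since corrupting a vertex can be simulated by corrupting all its edges, the same graph and protocols automatically tolerate a constant fraction of vertex corruptions with at most a $\poly(\eps)$ fraction of doomed vertices, as required.

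The main obstacle is tracking parameters across the compositions. Each application of \Cref{lem:composition} is expected to degrade the tolerance $\eps$ by a constant factor, raise the doomed fraction $\nu$ by at most a polynomial function, multiply work complexity by a $\pl n$ factor, and add the inner's round complexity to the outer's. Since only $O(1)$ compositions are performed, each quantity remains within the desired range; however, one must verify that the precise recursive relations implied by \Cref{lem:composition} indeed close up in this way, in particular that the final $\nu$ bound is $\poly(\eps)$ rather than a worse polynomial after $k$ iterations. Additionally, one should pick the size of the base BMV24 network so that after all compositions (which multiply vertex counts by the inner sizes) the final graph has $\Theta(n)$ vertices, and one should union bound the randomized protocol construction over the $O(1)$ composition steps to obtain the stated failure probability $\exp(-n\pl n)$. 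The deterministic construction of $G$ follows from the deterministic constructions of the base networks and the fact that the composition operation on graphs itself is an explicit polynomial-time operation.
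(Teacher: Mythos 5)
Your composition strategy is essentially the one the paper uses to prove \Cref{thm:main}: iterate \Cref{lem:composition} on copies of the~\cite{BMV24} network to drive the degree down to roughly $\log\log n$, and then do a final composition with~\cite{Upfal}'s constant-degree network, whose $2^{O(\deg)}$ work complexity becomes $\pl n$ once the degree is that small. That part matches.

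However, there is a genuine gap: everything you have built at that point lives in the \emph{permutation} model of \Cref{def:perm-model}, because that is the model in which \Cref{lem:composition} is stated and proved. What you obtain is, for every permutation $\pi$, a set of protocols $\{R(u,\pi(u))\}$ such that at most a $\poly(\eps)$-fraction of these \emph{protocols fail}. \Cref{cor:main} asks for something stronger and qualitatively different: protocols $\{R(u,v)\}$ between \emph{all pairs}, and a guarantee that at most a $\poly(\eps)$-fraction of \emph{vertices are doomed} (meaning every pair of non-doomed vertices communicates successfully). Your proposal never bridges this gap. The paper's \Cref{prop:perm-to-all-pairs} gives one bridge, but it blows up the work complexity to $n \cdot \pl n$, which would violate the $\pl n$ bound. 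Instead the paper's proof of \Cref{cor:main} in Section 4.4 uses a different, \emph{randomized} construction: decompose the complete graph into $n$ matchings to get permutation protocols $\{R'(u,v)\}$; for each pair $(u,v)$ pick a random intermediate set $S_{u,v}$ of size $\pl n$; route $u\to w\to v$ for each $w \in S_{u,v}$ and take a majority at $v$. A Chernoff bound shows that, for any fixed corruption set, all but $O(\eps)$-fraction of vertices then succeed against all partners, and a union bound over the $2^{O(n)}$ possible corrupted edge sets (the degree is constant!) yields the stated success probability $1 - \exp(-n\pl n)$. This is where the randomness in the theorem statement comes from; it is not a union bound over the $O(1)$ composition steps, which are entirely deterministic. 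Your closing remark about simulating vertex corruptions by edge corruptions is also a non-sequitur here, since \Cref{cor:main} is already stated in the edge-corruption model, and in any case addresses a different issue than the permutation-to-all-pairs/doomed-vertices conversion that is actually missing.
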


To prove~\Cref{cor:main}, we first establish a version of the above result in the ``permutation model'' of routing (which is the model most relevant for PCP constructions). In this model, given a permutation $\pi$ on $V(G)$, the goal is to construct a set of protocols $\cR=\{R(u,\pi(u))\}_{u\in G}$, such that if at most $\eps$-fraction of behave adversarially, then at most $f(\eps)$-fraction of the  protocols in $\cR$ fail, denoted by $(\eps,f(\eps))$-tolerance; see~\Cref{def:perm-model} for a formal definition.
\begin{lemma}\label{thm:main}
There exists $D>0$ such that for all $\eps>0$, and all $n$, there exists a $D$-regular graph $G = (V,E)$ with $\Theta(n)$ vertices such that for each permutation $\pi\colon V\to V$, the graph $G$ admits protocols $\mathcal{R} = \{R(u,\pi(u))\}_{u\in G}$ with work complexity $\pl n$ that are $(\eps,\poly(\eps))$-tolerant. Both the graph and the protocols can be constructed deterministically in polynomial time.
\end{lemma}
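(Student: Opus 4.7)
The plan is to apply the composition theorem (\Cref{lem:composition}) iteratively, starting with the polylogarithmic-degree edge-fault tolerant network of \cite{BMV24} and finishing with the constant-degree network of \cite{Upfal}. Concretely, the network of \cite{BMV24} on $N$ vertices has degree $\pl N$, is $(\eps, \poly(\eps))$-tolerant in the permutation model against edge corruptions, and admits protocols of work complexity $\pl N$. This serves as the seed outer network, and the goal is to reduce its degree to an absolute constant without losing more than polynomially in the tolerance and work complexity.

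First, I would repeatedly self-compose the seed. Given an outer network $G_i$ on $\Theta(n)$ vertices of degree $d_i$, take $H_i$ to be a copy of the \cite{BMV24} network on $d_i$ vertices, which has degree $\pl d_i$ and $(\eps, \poly(\eps))$-tolerant permutation-routing protocols of work complexity $\pl d_i$. Applying \Cref{lem:composition} with outer $G_i$ and inner $H_i$ produces $G_{i+1}$ on $\Theta(n)$ vertices of degree essentially $\pl d_i$, with $(\eps, \poly(\eps))$-tolerant protocols whose work complexity remains bounded by $\pl n$ (since the inner complexity $\pl d_i$ is absorbed into the outer complexity $\pl n$). After a bounded number $k$ of iterations, the degree shrinks to $d_k = \mathrm{polylog}^{(k)} n$, which for a suitably chosen constant $k$ is at most $O(\log \log n)$.

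At that point I would perform one final composition with \cite{Upfal}'s constant-degree network on $d_k$ vertices playing the role of the inner graph. That network has some absolute constant degree $D$ and is $(\eps, \poly(\eps))$-tolerant, but its protocols have work complexity $\exp(d_k)$; by the choice of $k$, this is at most $\exp(O(\log\log n)) = \pl n$, so inserting it costs nothing in the overall work complexity. One last application of \Cref{lem:composition} then produces a $D$-regular network on $\Theta(n)$ vertices whose permutation-routing protocols have work complexity $\pl n$ and are $(\eps, \poly(\eps))$-tolerant, which is exactly the statement of \Cref{thm:main}. Determinism in polynomial time follows from the explicit constructions of both \cite{BMV24} and \cite{Upfal} and from the composition operation itself being a local combinatorial product.

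The main obstacle is controlling the accumulated degradation through the composition chain: each application of \Cref{lem:composition} polynomially worsens the tolerance parameter (from $\eps$ to $\poly(\eps)$) and multiplies the work complexity by a $\pl$ factor. As long as the number of compositions is an absolute constant (which it is, because $\mathrm{polylog}^{(k)} n$ falls below $\log \log n$ for $k$ large), these blowups are bounded, and one only needs to start with $\eps$ small enough so that after the constant-many $\eps \mapsto \poly(\eps)$ steps the parameter still lies in the regime required by the composition lemma and by the inner networks. Verifying the bookkeeping at each step, and checking that the vertex counts of the inner networks (which must equal the current outer degree) are all admissible for the \cite{BMV24} and \cite{Upfal} constructions, is the main technical content of the argument.
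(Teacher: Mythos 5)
Your proposal takes essentially the same approach as the paper: iterate the composition of \Cref{lem:composition}, bootstrapping from the polylogarithmic-degree network of~\cite{BMV24} via a constant number of self-compositions (the paper uses exactly two, landing at degree $\text{polylogloglog}\,n$), and then cap it off with one composition against the constant-degree but exponential-work network of~\cite{Upfal}, whose $\exp(d_k)$ work is tamed because $d_k$ has already been pushed below $O(\log\log n)$.

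One step you gloss over: the hypothesis of \Cref{lem:composition} on the \emph{inner} graph $H$ is not the permutation-routing guarantee, but an all-pairs reliable-transmission guarantee with a bound on the fraction of doomed \emph{vertices}, whereas \Cref{thm:routing-polylog} only delivers permutation-model protocols. The paper bridges this with \Cref{prop:perm-to-all-pairs}, which converts permutation protocols into all-pairs protocols at the cost of a factor $|V(H)|$ in work complexity and a $\nu \mapsto \sqrt{\nu}$ loss in tolerance; both losses are affordable precisely because $|V(H)|$ is already sublogarithmic. Your remark that "one only needs to verify the bookkeeping" is doing real work here, and this conversion is part of that bookkeeping (as is the relaxation of~\Cref{rem:composition}, which you correctly flag, allowing $|V(H)|$ to be within a constant factor of $\deg(G)$ rather than exactly equal).
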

We prove~\Cref{thm:main} in~\Cref{sec:perm-final}, 
and in~\Cref{sec:cor-final} we show how to deduce~\Cref{cor:main} from it.\footnote{We remark that the two models discussed above are morally equivalent:~\Cref{cor:main} also implies~\Cref{thm:main}, albeit with a randomized algorithm for constructing  $\cR$.}

\subsection{Implications}\label{sec:implications}
We now discuss the implications of~\Cref{cor:main}. Firstly, it implies that one can simulate any protocol built for the complete network on the sparse networks from~\Cref{thm:main}, albeit with polylogarithmic overhead. This is because any message transfer on the edge $(u,v)$ in the complete graph, can be simulated using the protocol $R(u,v)$ for the sparse network. More precisely:
\begin{corollary} \label{cor:MPP}
Let $G$ be a $D$-regular graph on $n$ vertices from~\Cref{thm:main} with the set of protocols $\cR = \{R(u,v)\}_{u,v\in V(G)}$, then for all $\eps > 0$ the following holds. Suppose $P$ is a protocol for a distributed task $T$ on the complete network on $n$ nodes that can tolerate up to $\eps^c n$ corrupted nodes, where $c>0$ is a universal constant. Then, there exists a protocol $\Pi$ such that:
\begin{enumerate}
\item The number of rounds of communication in $\Pi$ is at most $\text{round}(P)\pl n$, where $\text{round}(P)$ denotes the round complexity of $P$.
\item The work complexity of $\Pi$ is at most $\text{work}(P)\pl n$, where $\text{work}(P)$ denotes the work complexity of $P$, that is, the total computation performed by all nodes to implement $P$.
\item  If an adversary corrupts any $\eps$-fraction of nodes, then there is a set of nodes $S \subset V$ where $|S| \ge (1-\eps^c)n$ that output the desired value as required by the task $T$.
\end{enumerate}
\end{corollary}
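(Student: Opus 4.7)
The plan is to use the graph $G$ from~\Cref{cor:main} to directly simulate the complete network on which $P$ runs. Identify each node $u$ of the complete network with the vertex $u\in V(G)$ storing its local state, and implement every round of $P$ on $G$ as follows: for each pair $(u,v)$ that would communicate in that round of $P$, invoke the protocol $R(u,v)\in\cR$ to transport the message along the edges of $G$. Since $\cR$ is a single suite of protocols designed for $G$ together, all of the $R(u,v)$'s can be invoked in parallel within each round of $\Pi$, pipelining their messages over the edges of $G$.

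Round and work complexities follow directly from the bounds on $\cR$. A single round of $P$ triggers at most $n^2$ invocations of protocols from $\cR$, each using $\tO(\log n)=\pl n$ rounds and $\pl n$ work, so a standard pipelining of short routing protocols simulates each round of $P$ in $\pl n$ rounds of $\Pi$, giving a total round complexity of $\text{round}(P)\cdot\pl n$. For work, $\text{work}(P)$ already counts every bit of pairwise communication performed by $P$, and each such bit is transported through $\cR$ with overhead $\pl n$, yielding work complexity $\text{work}(P)\cdot\pl n$.

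For fault tolerance, let $A\subseteq V(G)$ be the set of adversarially corrupted vertices, so $|A|\leq\eps n$. Since $G$ is $D$-regular, controlling $A$ corresponds to controlling at most $\eps D n$ edges of $G$, i.e., an $O(\eps)$-fraction; applying~\Cref{cor:main} to this edge-fault pattern produces a set $B$ of doomed vertices with $|B|\leq\poly(\eps)\,n$, and for every pair $u,v\notin A\cup B$ the protocol $R(u,v)$ delivers messages correctly. Inside the simulated execution of $P$, the only parties whose behaviour can deviate from $P$'s specification are those in $A\cup B$: genuinely Byzantine parties in $A$, and honest but isolated parties in $B$ whose simulated transmissions may be dropped or forged. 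Choosing the universal constant $c>0$ small enough that $\eps n+\poly(\eps)\,n\leq\eps^c\,n$ for all sufficiently small $\eps$, the fault-tolerance of $P$ guarantees that the set $S:=V(G)\setminus(A\cup B)$, of size at least $(1-\eps^c)n$, outputs the value prescribed by the task $T$.

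The main technical point to verify is the simultaneous execution of many routing protocols across the shared edges of $G$ under a single, possibly adaptive, adversary. This is handled by invoking~\Cref{cor:main} at the level of the entire suite $\cR$ so that a single doomed set $B$ is valid for all pairs $(u,v)$ simultaneously, combined with a standard load-balanced scheduling of the short, low-work protocols that costs only a $\pl n$ factor in both rounds and work.
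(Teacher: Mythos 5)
Your approach is exactly the one the paper intends (the paper gives only a one-sentence sketch: simulate each message transfer on an edge $(u,v)$ of the complete graph by invoking $R(u,v)$), and your fleshed-out argument is essentially correct. Two small clarifications would tighten it. First, the ``pipelining'' and ``load-balanced scheduling'' language is a red herring: in the paper's model (Definition~\ref{def:comm-protocol}), a vertex may send a message of arbitrary size to each neighbor per round, so all $R(u,v)$ protocols for a given round of $P$ can simply be run in lockstep, bundling the per-edge traffic into one message; the round overhead is then the maximum round complexity $\pl n$, with no scheduling argument needed, and the work bound follows because $\text{work}(P)$ already charges at least one unit per message that $P$ sends, while $\Pi$ charges $\pl n$ per such message. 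Second, the set $B$ of doomed vertices produced by Corollary~\ref{cor:main} already contains the corrupted set $A$ (see the footnote to Definition~\ref{def:ae-transmission}), so $A\cup B = B$; and you should be a bit careful with the quantitative bookkeeping: $\eps n$ corrupted vertices control an $O(\eps)$-fraction of edges, so applying the $(\eps'^{32}, O(\eps'))$-tolerance of Corollary~\ref{cor:main} with $\eps' \asymp \eps^{1/32}$ gives a doomed set of size $O(\eps^{1/32})n$, dictating the universal constant $c$ (roughly $c < 1/32$) rather than letting it be chosen freely. Neither point affects the correctness of the high-level argument, which matches the paper's.
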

One can instantiate~\Cref{cor:MPP} with protocols for Byzantine agreement
to get sparse networks that support protocols for almost-everywhere agreement. 
Using the result of~\cite{GarayO08}, we immediately get sparse networks that support almost-everywhere secure MPC. We refer the reader to~\cite[Theorem 4.3]{GarayO08} and further developments by~\cite{ChandranFGOPZ22} for a formal treatment of the topic.

\subsection{Techniques}\label{sec:techniques}
The proof of~\Cref{thm:main} consists of several building blocks. First, we need a construction of  network with the properties as outlined in the theorem, except that we allow the degree of $G$ to be poly-logarithmic in $n$. This component is achieved by the construction of~\cite{BMV24}. Second, we need a network with constant degree which is tolerant to a constant fraction of edge corruptions, however it is allowed to be inefficient. This component is achieved by the construction of~\cite{Upfal}. Finally, we need a composition technique which allows one to combine two networks with fault-tolerant routing protocols, one large and another small one, to get a network that inherits the degree of the smaller network and has fault-tolerant routing protocols. This is shown in~\Cref{lem:composition} and is the main technical contribution of the current work. 

In the rest of the technical overview, we elaborate on the proof of~\Cref{lem:composition}, and at the end we discuss how to use this result to achieve~\Cref{thm:main}.
\vspace{-1ex}
\subsubsection{The Balanced Replacement Product}\label{sec:rep-product} 
Our composition uses a graph product known as the balanced replacement product~\cite{ReingoldVW00}. Given an $n$-vertex, $d$-regular graph $G$ and a $d$-vertex, $k$-regular graph $H$, consider the graph $Z = G\rcirc H$ defined as follows. First, for each vertex of $G$, fix some ordering on the $d$ edges incident to it. 
\begin{itemize}
  \item Replace a vertex $u$ of $G$ with a copy of the graph $H$, henceforth denoted by $C_u$ and referred to as the cloud of $u$. For $u \in V(G)$, $c \in V(H)$, let $(u,c) \in V(Z)$ denote the $c^{th}$ vertex in the cloud of $u$. Also, let $(u,c)_1$ denote $u\in V(G)$ and $(u,c)_2$ denote $c\in V(H)$.
  \item Associate each edge $e\in E(G)$ incident on $u$ to a unique vertex in the cloud of $u$, according to the ordering of the edges incident on $u$. 
  \item For an edge $e \in E(G)$ with endpoints $u,v$, we add $\deg(H)$ parallel edges between the vertices $(u,e_1)$ and $(v,e_2)$, where these are the vertices
 in $C_u$ and $C_v$ that are associated to the edge $e$. %
\end{itemize}
Note that $Z$ has $|V(G)||V(H)|$ vertices (using the fact that $\deg(G) = |V(H)|$) and each vertex $(u,a)$ has $\deg(H)$ edges incident on it inside the cloud $C_u$ and $\deg(H)$ edges incident on it outside the cloud. This implies that $Z$ has degree $2\deg(H)$ and also that the total number of edges inside clouds is equal to the total number of edges across clouds. The utility of this fact will be that if at most $\eps$-fraction of edges in $G\rcirc H$ have been corrupted, then at most $2\eps$ of the edges inside the clouds and at most $2\eps$
of the edges across the clouds have been corrupted.

\subsubsection{Routing Protocols on the Composed Graph} 
The next step in our proof is to design fault-tolerant routing protocols on $Z=G\rcirc H$, given a set of routing protocols for $G$ and $H$ that are tolerant against a constant fraction of edge corruptions. The work complexity of the new protocols is roughly the product of the work complexities of the protocols on $G$ and $H$. Thus, as the degree of $Z$ is $2\deg(H)$, we have produced a network with similar performance to $G$ but much smaller degree. 

Our composition result is proved for the permutation model (towards proving~\Cref{thm:main}). Henceforth assume that for every permutation $\pi'$ on $V(G)$, there is a set of fault-tolerant protocols $\{R(u,\pi'(u))\}_{u\in G}$ and similarly fix a set of fault-tolerant protocols $\{P(a,b)\}_{a,b\in V(H)}$ between all pairs of vertices of $H$. Now, given any permutation $\pi\colon V(Z)\to V(Z)$, we will design routing protocols $\cR = \{R((u,a),\pi(u,a))\}_{(u,a)\in Z}$ whose fault-tolerance parameters depend on the tolerance parameters of $G$ and $H$.

At a high level, each protocol $R((u,a),\pi(u,a))$ simulates some protocol $R$ of $G$ for a message transfer from $u$ to $\pi(u,a)_1$; note that $u$ and $\pi(u,a)_1$ are the clouds that $(u,a)$ and $\pi(u,a)$ belong to respectively. To decide which protocol on $G$ to use for each transmission from $(u,a)\rightarrow \pi(u,a)$, we first ``decompose'' the permutation $\pi$ into $d=\deg(G)$ permutations $\pi_1,\ldots, \pi_d$ on $V(G)$ (see~\Cref{sec:composition}). We then invoke the premise on the graph $G$, asserting that there are protocols $\mathcal{P}_i= \{R(u,\pi_i(u))\}$ corresponding to each matching $\pi_i$. Now the protocol $R((u,a),\pi(u,a))$ simulates the protocol $R(u,\pi_i(u))$, where $\pi_i(u)$ is guaranteed to be the cloud labeled $\pi(u,a)_1$.

The simulation of $R$ on $Z$ proceeds by emulating the transfer of some message across an edge $e=(v,w)$ in $G$, using a message transfer from the cloud of $v$ to the cloud of $w$ in $Z$. For simplicity imagine the case when no edges are corrupted. 
In the composed protocol, a vertex $(u,a)$ begins by sending its message $m$ to every vertex in its cloud using the protocols of $H$ (recall that each cloud is a copy of $H$). After that step, each one of the vertices $\{(u,b)\}_{b\in V(H)}$ holds the message $m$, and next the first round of $R$ is implemented. More precisely, suppose that in the protocol $R$ there is a message transmission from $u$ to $v$ along the edge $e$. Then in the simulation over $Z$, there is a transmission from the cloud of $u$ to the cloud of $v$: the message $m$ is first passed from $(u,e_1)$ to $(v,e_2)$ using the parallel edges between the vertices $(u,e_1)$ and $(v,e_2)$ (which are the vertices in the clouds that are associated to $e$), and then the message is propagated in the cloud of $v$ using the protocols of $H$ again. Thus, at the end of the first round, each cloud $C_v$ receives the message that $v$ was supposed to receive from $u$ in the first round. Subsequent rounds of $R$ are simulated in a similar manner. To argue correctness, one observes that if there are no corruptions, then each cloud $C_w$ acts ``unanimously''
(in the sense that all vertices in it are in full agreement) and hence can be thought of as a single entity, so that the correctness follows from the correctness of $R$.

Now suppose that an $\eps$-fraction of all the edges between clouds are corrupted, and no edges inside clouds are corrupt. 
In this case the clouds also act unanimously, and therefore we can analyze it by invoking the tolerance guarantees of $G$. Specifically, if the adversary corrupts more than $1/2$-fraction of the parallel edges between two neighboring clouds $C_v$ and $C_w$, it corresponds to an adversarial corruption of the edge $(v,w)$ in $G$. By Markov's inequality there can be at most a $2\eps$-fraction of such edges in $G$. Then, $R((u,a),\pi(u,a))$ is a simulation of $R$ under these edge corruptions, where each cloud behaves unanimously during the simulation. The tolerance guarantees of $G$ imply that most protocols $R(u,\pi_i(u))$ succeed, giving us that most of the protocols $R((u,a),\pi(u,a))$ succeed.

Let us finally discuss the general case, in which an $\eps$-fraction of all the edges of $Z$ are corrupted. This could include edges inside clouds too and we need to be more careful while analyzing the simulation. In particular, if at most $\eps$ fraction of edges of $Z$ are corrupted, most clouds will contain at most $\sqrt{\eps}$-fraction bad edges. The tolerance guarantees of $H$ now imply that at most $\poly(\eps)$-fraction of the vertices in each such cloud are ``doomed'', that is, the rest of the $1-\poly(\eps)$-fraction of vertices can perfectly communicate with each other. Given this, the clouds will behave almost-unanimously during the simulation of $R$ -- each incoming/outgoing message at $v$ in $R$, would be held by the non-doomed vertices in $C_v$. Thus if $R$ succeeds, then by the simulation guarantee, most vertices in the cloud of $\pi(u,a)_1$ would receive the correct message, and finally $\pi(u,a)$ can take a majority vote among these to get $m$. Therefore, we can still invoke the tolerance of $G$ to get that most of the protocols $R((u,a),\pi(u,a))$ succeed.

This completes the informal description of the protocol on $Z$ and its fault-tolerance; we defer the reader to Section~\ref{sec:composition} for a more formal presentation.

\subsubsection{Using Composition to Obtain Sparse Networks}
With the composition result in hand, we now explain how to prove Theorem~\ref{thm:main}. We start off with an $n$-vertex regular graph $G$ from~\cite{BMV24} that admits efficient tolerant protocols, where the degree of $G$ is $\Delta = {\sf poly}(\log n)$. We take a copy of $G$, which we call $\tilde{G}$, on $\Delta$ vertices\footnote{In reality we take $\tilde{G}$ on $\tilde{n}$ number of vertices, where $\Delta\leq \tilde{n}\leq O(\Delta)$, since that is the guarantee we have from the algebraic constructions of HDX.} thus, the degree of $\tilde{G}$ is $\tilde{\Delta} = {\sf poly}(\log\log n)$. Composing $G$ and $\tilde{G}$ gives a graph $G'$ on $n\Delta$ vertices that admits efficient, tolerant protocols and has degree $O(\tilde{\Delta}) = {\sf poly}(\log\log n)$. Repeating the composition one more time yields a graph $G''$ on $\Theta(n\Delta\tilde{\Delta})$ vertices with 
degree $d = {\sf poly}(\log\log\log n)\leq \log\log n$.

Finally, we take a constant degree expander graph $H$ on $D$ vertices with sufficiently good spectral expansion, as considered in~\cite{Upfal}. Upfal showed that such graphs $H$ admit protocols that are tolerant against constant fraction of edge corruptions and furthermore each vertex in $H$ has work complexity at most $2^{O(D)} = {\sf poly}(\log n)$. We 
invoke one final composition step, composing $G''$ with $H$, to get a constant degree graph that admits  efficient and tolerant protocols.

	\section{Preliminaries}
\paragraph{Notations:}
For functions $f,g\colon\mathbb{N}\to [0,\infty)$, we denote $f\lll g$ or $f = O(g)$ if there exists an absolute constant $C>0$ such that 
$f\leq C g$. Similarly, we denote $f\ggg g$ or $f = \Omega(g)$ if there exists an absolute constant $C>0$ such that 
$f\geq C g$.

\subsection{Routing Protocols under Adversarial Corruptions}
We start the discussion by formally defining communication protocols on a graph $G$.
\begin{definition}\label{def:comm-protocol}
Given a graph $G$, a routing protocol $P$ on $G$ is a set of rules where at each round, every vertex receives  messages from its neighbors, performs an arbitrary computation on all the messages received so far, and then based on the computation, sends some messages to its neighbours in $G$. There are two main parameters of interest:
\begin{enumerate}
\item Round complexity of $P$: This is the number of rounds of communication in $P$ and is denoted by $\text{round}(P)$.
\item Work complexity of $P$: The work done by a vertex in $P$ is the computation it performs throughout the protocol, as measured by the circuit size for the equivalent Boolean functions that the vertex computes. The work complexity of $P$, denoted $\text{work}(P)$, is the total work done by all the vertices to execute $P$.
\end{enumerate}
\end{definition}

Given this definition of communication over a graph $G$, we next define the two routing models that we study in this paper and the notion of fault-tolerance for them. 
We consider adversarial corruptions of edges -- we say an edge $(u,v) \in G$ is uncorrupted if whenever $u$ transfers a message $\sigma$ across $(u,v)$, then $v$ receives $\sigma$; otherwise, we say the edge $(u,v)$ is corrupted. 

The first model we consider is the almost-everywhere reliable transmission problem~\cite{Dwork}.
\begin{definition}\label{def:ae-transmission}
In the almost-everywhere reliable transmission model, the goal is to design a set of routing protocols $\cP=\{P(u,v)\}_{u,v\in V(G)}$ for message transmission between all pairs of vertices of $G$, where $P(u,v)$ transmits a message between $u$ and $v$. The parameters of interest are,
\begin{enumerate}
    \item Round complexity of $\cP := \max_{u,v}\text{round}(P(u,v))$.
    \item Work complexity of $\cP := \max_{u,v}\text{work}(P(u,v))$.     
    \item Error Tolerance: The set of protocols $\cP$ is said to be $(\eps,\nu)$-vertex (edge) tolerant if the following holds. Suppose an adversary corrupts any set of at most $\eps$-fraction of the vertices (edges) of $G$. Then there exists a set $S$ of at least $(1-\nu)$-fraction of vertices $G$, such that any two vertices $u,v \in S$ can reliably transmit messages between each other using $P(u,v)$. The set of vertices outside $S$ will be referred to as doomed nodes.\footnote{Here we have adapted the notion of doomed vertices for the edge-corruption model. In the context of vertex corruptions, that is, when all corrupt edges are concentrated on a few vertices, our notion of doomed vertices includes the corrupted vertices as well as those which are not corrupted by the adversary, but cannot communicate nevertheless.}
\end{enumerate}
\end{definition}

We also consider the more refined permutation routing model, which is relevant to PCPs~\cite{BMV24}.
\begin{definition}[Permutation Model]\label{def:perm-model}
In the permutation routing model, given a permutation $\pi: V(G) \rightarrow V(G)$, every vertex $u$ wishes to send a message to $\pi(u)$. The goal is to design a set of routing protocols $\cP=\{P(u,\pi(u))\}_{u\in V(G)}$, where $P(u,\pi(u))$ transmits a message between $u$ and $\pi(u)$. The parameters of interest are,
\begin{enumerate}
    \item Round complexity of $\cP := \max_u \text{round}(P(u,\pi(u)))$.
    \item Work complexity of $\cP:= \max_u\text{work}(P(u,\pi(u)))$.    
    \item Error Tolerance: The set of protocols $\cP$ is $(\eps,\nu)$-tolerant if the following holds. Suppose an adversary corrupts any set of at most $\eps$-fraction of the edges of $G$. Then at least $(1-\nu)$-fraction of the protocols $P(u,\pi(u))$ can reliably transmit a message between $u$ and $\pi(u)$.
\end{enumerate}
\end{definition}

	\section{Composition of Protocols using the Replacement Product}\label{sec:composition}
In this section, we let $G$ and $H$ be regular unweighted graphs (possibly with multi-edges) with $|V(H)|=\deg(G)=d$ that both admit efficient fault-tolerant routing protocols. We show that the graph $Z = G\rcirc H$, as defined in~\Cref{sec:rep-product}, also admits an efficient fault-tolerant protocol.

\subsection{The Protocol on $Z$}
Consider the graph $G,H$ and fix a set of protocols $\cP_H=\{P_{a,b}\}_{a,b\in V(H)}$ between all pairs of vertices of $H$. Fix a permutation $\pi$ on $V(Z)$; we will build a set of protocols $\cR=\{R((u,a),\pi(u,a))\}$ to route $\pi$. 

Our first step is to decompose $\pi$ into $d$ permutations $\pi_1,\ldots, \pi_d$ on $V(G)$. Specifically, we break the set of pairs $\{(u,a),\pi(u,a)\}_{(u,a)\in V(Z)}$ into $d$ sets, $S_1, \ldots, S_d$. Each set $S_i$ contains $n$ pairs such that their projection on the first coordinate forms a permutation $\pi_i$ on $G$, i.e. $\pi(u,a)_1 = \pi_i(u)$ for all $(u,a)\in S_i$. Now, for each permutation $\pi_i$ we consider the set of routing protocols $\cP_i = \{R(u,\pi_i(u))\}$ that route $\pi_i$ on $G$. Then for every vertex $(u,a)$ in $S_i$, the protocol $R((u,a),\pi(u,a))$,  ``simulates the protocol'' $R(u,\pi_i(u))$, denoted by $R$ henceforth.

In the introduction, we explained informally how one
``simulates a protocol $R$'' that is built for $G$, on the graph $Z$, and we now we make this more formal.
The simulation on $Z$ uses a sub-procedure called the cloud-to-cloud protocol, which at a high level, passes a message vector from a cloud $C_u$ to a cloud $C_v$, to emulate a message transfer from $u$ to $v$ across an edge $(u,v)$ in $G$. At the first step in our protocol, recall that $(u,a)$ sends its message $m$ to each vertex in $C_u$ using the protocols $\cP_H$ on $H$. Then the first round of $R$ is implemented-- each vertex in $C_u$, computes the message that $u$ would have sent across an edge $(u,v)$ (according to $R$), given the message it has received from $(u,a)$. This can be represented by  an outgoing message vector on $C_u$. In the ideal scenario, this vector equals $m'$ on every coordinate, where $m'$ is the message that $u$ sends to $v$ in the first round of $R$ when instantiated with $m$. This message vector is now transferred to $C_v$, using the cloud-to-cloud protocol, which we discuss in more detail later on. The subsequent rounds of $R$ are simulated in a similar way. That is, at any round $t$, for every cloud $C_v$, each vertex $(v,b)$ in it, first aggregates all the messages it received from its neighbors in the previous round, as part of the simulation of $R$. It computes the outgoing message that $R$ would have sent across any incident edge $(v,w)$ on this transcript and this is represented by an outgoing message vector on the cloud $C_v$. This vector then gets transferred to the cloud $C_{w}$ via the cloud-to-cloud protocol.

To formalize this protocol we set up some notations. Let $E_v(G)$ be the set of edges that are incident on $v$ in $G$, and let $2E(G)$ represent the set of pairs $(v, e)$ where $e \in E_v(G)$. We specify a protocol $R$ using a set of functions $\{\out_t: 2E(G)\times \Sigma^* \rightarrow \Sigma^* \cup \{\perp\}\}_{t\leq T}$, where $\out_t(v,e,\sigma)$ denotes the message that a vertex $v$ sends along an edge $e$ at time $t$,  given the transcript $\sigma$ of the previous rounds. The message received at $v$ at the end of the protocol is denoted by $\out_T(v,\sigma)$ and $\out_T$ is also part of the functions above. 
To simulate $R$ on $Z$, at every vertex $(v,b)$ we maintain the transcript of the protocol using the strings $\{\text{history}_t(v,b)\}$ and the outgoing messages it sends with respect to the edge $e \in E_v(G)$, denoted by $\out_t((v,b),e)$. At any time step $t$, we have an outgoing message vector on the cloud $C_v$, denoted by $[\out_t((v,b),e)]$, to be sent across the edge $e$, to the cloud $C_w$ using the cloud-to-cloud protocol from $C_v$ to $C_w$. The incoming message vector on $C_w$ in the next round is denoted by $[\In_{t+1}((w,c),e)]$. 

Given the cloud-to-cloud protocol, the routing protocol for $Z$ proceeds as follows.
\begin{mdframed}
\begin{algo}[Generate Protocols on $Z$ to Route a Permutation on $V(Z)$]\mbox{}\label{algo:zz-protocol}\\
\textbf{Input:} 
\begin{enumerate}
    \item A graph $G$ with an algorithm that given any permutation $\pi$ on $V(G)$, constructs the set of protocols $\cP(\pi)=\{P(u,\pi(u))\}_{u\in V(G)}$. 
    \item A graph $H$ with $|V(H)|=\deg(G)$ and a set $\cP_H$ of routing protocols $\{P(a,b)\}_{a,b\in V(H)}$. 
    \item A permutation $\pi: V(Z)\rightarrow V(Z)$.
\end{enumerate}
\textbf{Output:} A set of routing protocols $\cR=\{R((u,a),\pi(u,a))\}$ on $Z$.
\begin{enumerate}
    \item Create a bipartite graph $B= (L\sqcup R, E)$, with the left and right vertex sets, $L=R=V(G)$ and for every pair $((u,a),\pi(u,a))$, add an edge between $u$ and $\pi(u,a)_1$ (the vertex associated to $\pi(u,a)$ in $G$) with the label $((u,a),\pi(u,a))$.
    \item  Decompose $B$ into $d=\deg(G)$ matchings,
    each represented by a permutation $\pi_1,\ldots,\pi_d$ on $V(G)$. Let $S_i$ be the set of edge labels $((u,a),\pi(u,a))$ corresponding to the matching $\pi_i$.
    \item Let $\cP_i$ be the set of protocols that route $\pi_i$ on $G$, that is, $\cP_i = \{R(u,\pi_i(u))\}_{u\in G}$, with $T$ denoting the maximum round complexity of all the protocols.
    Associate to every pair $((u,a),\pi(u,a))$ the routing protocol $R(u,\pi_i(u))$ from $\cP_i$, where $i$ is the unique index for which $((u,a),\pi(u,a))$ belongs to the set $S_i$; note that $\pi(u,a)_1 = \pi_i(u)$. 
    \item For each vertex $(u,a)$, output the following protocol,
\end{enumerate}
\textbf{Protocol $R((u,a),\pi(u,a))$ for message transmission from $(u,a)$ to $\pi(u,a)$:}  
\begin{enumerate} 
\item At every vertex $(v,b)\in V(Z)$ initialize the strings: $\{\text{history}_t(v,b)\}_{t\leq T}$ and\\ $\{\out_t((v,b),e)\}_{e \in E_v(G),t\leq T}$, to empty strings; these will eventually store the transcript and the outgoing messages of the protocol at $(v,b)$. 
\item Set $\text{history}_1(u,a)=m$, the message that $(u,a)$ wishes to send. Then for each $b\in H$, send $m$ via the protocol $P(a,b)$ from $\cP_H$, and set $\text{history}_1(u,b)$ to the message received.
\item Let $R$ be the routing protocol on $G$ that is associated to $((u,a),\pi(u,a))$, specified by the functions $\{\out_t: 2E(G)\times \Sigma^* \rightarrow \Sigma^* \cup \{\perp\}\}_{t\leq T}$. Simulate the protocol $R$ as follows:\\
For every round $t \in \{1,\ldots, T\}$:
\vspace{-1ex}
\begin{enumerate}
\item If $t=1$, then skip to step (b). \\
Else, for every vertex $(v,b)$ in $Z$ and $e \in E_v(G)$, \\
Set $\text{history}_{t}(v,b) = \text{history}_{t-1}(v,b)\circ \In_{t}((v,b),e)$, \\
where $[\In_{t}((v,b),e)]_{b\in H}$ is the incoming message vector received at round $t$ on the cloud $C_v$ with respect to the edge $e$. 
\item If $t=T$ then for every vertex $(\pi_i(u),b)$ in $C_{\pi_i(u)}$:\\ 
set the message received as, $\out_{T}((\pi_i(u),b)) =\out_{T}(\pi_i(u),\text{history}_{T}(\pi_i(u),b))$, then end the protocol.\\
Else, for every vertex $(v,b)$ in $Z$ and edge $e\in E_v(G)$, \\
set $\out_{t}((v,b),e) =\out_{t}(v,e,\text{history}_{t}(v,b))$.
\item For every vertex $v\in G$ and edge $e\in E_v(G)$,
run the cloud-to-cloud protocol with input $e$, the graph $H$ with protocols $\cP_H$ and message vector $[\out_t((v,b),e)]_{b\in C_v}$. 
\end{enumerate}
\end{enumerate}
\end{algo}
\end{mdframed}
We now discuss the cloud-to-cloud protocol. At a high level, in this protocol we simulate a message transfer across an edge $e$ of $G$ with endpoints $v,w$ as a message transfer from $C_v$ to $C_w$. We do so by using the $\deg(H)$ parallel edges between $(v,e_1)$ and $(w,e_2)$,  the vertices in the clouds of $v$ and $w$ that are associated to the edge $e$. Thus the cloud to cloud transfer takes the form
\[C_v \rightarrow (v,e_1) \rightarrow (w,e_2) \rightarrow C_v,\]
where each arrow above denotes a propagation using majority votes. More precisely, the cloud-to-cloud protocol proceeds as follows.
\begin{mdframed}
\begin{protocol}[Cloud to Cloud Protocol]\mbox{}\label{algo:cloud-to-cloud}\\
\textbf{Input:} The graphs $G$ and $H$, with a set of protocols between all pairs of vertices $\cP_H=\{P(a,b)\}_{a,b\in H}$, an edge $e\in G$, that is associated to $(v,e_1)$ in $C_v$ and $(w,e_2)$ in $C_w$ and a message vector $M_{in}: C_v \to \Sigma$. \\
\textbf{Output:}  A message vector $M_{out}: C_w \to \Sigma$.
\begin{enumerate}
\item Every vertex $(v,b)\in C_v$, transmits the message $M_{in}(v,b)$ to $(v,e_1)$ via the protocols $P(b,e_1)$.
\item The vertex $(v,e_1)$ takes a majority of the messages it receives from vertices in $C_v$, and transmits these messages to $(w,e_2)$ via the parallel edges between $(v,e_1)$ and $(w,e_2)$. 
\item The vertex $(w,e_2)$ takes a majority of the messages it receives from $(v,e_1)$ and transmits this message to every vertex $(w,c)\in C_w$ via the protocol $P(e_2,c)$, that sets $M_{out}(w,c)$ as the message it receives.
\end{enumerate}
\end{protocol}
\end{mdframed}
This completes the formal description of the routing protocol over $Z$.
\subsection{Analysis of Protocols Generated by~\Cref{algo:zz-protocol}}
Each protocol $R((u,a),\pi(u,a))$ is a simulation of some protocol $R=R(u,\pi_i(u))$ on $G$, where every message transfer on an edge in $G$ is simulated by a cloud-to-cloud transfer in $Z$. As discussed in the introduction, it is easy to see that these protocols succeed in simulating the protocols on $G$ correctly if there are no edge corruptions. To address the general case where edge corruptions are present, we think of each edge $e$ in $G$ as a ``super-edge''. Under a set of edge corruptions on $Z$, we say that a super-edge $e = (v,w)$  is good if the corresponding cloud-to-cloud transfer from $C_v$ to $C_w$ always succeeds; otherwise, we say $e$ is bad. When we say 
the ``cloud-to-cloud transfer is successful'', we roughly mean that if a majority of vertices in $C_v$ set their outgoing message as $\sigma$, then a majority of vertices in $C_w$ will receive $\sigma$ at the end of~\Cref{algo:cloud-to-cloud}. We first show that if at most $\eps$-fraction of edges in $Z$ are bad, then at most $O(\sqrt{\eps})$-fraction super-edges are bad, denoted by the set $\cS$. We then show that if the protocol $R(u,\pi_i(u))$ succeeds in transferring a message from $u$ to $\pi_i(u)$  under any adversarial behavior of the edges in $\cS$, then $R((u,a),\pi(u,a))$ also succeeds in transferring a message from $(u,a)$ from $\pi(u,a)$. Once we have this, since the measure of $\cS$ is small, the tolerance guarantees of $G$ imply that only few of the protocols $R(u,\pi_i(u))$ are unsuccessful. This in turn implies that only a few of the protocols $R((u,a),\pi(u,a))$ are unsuccessful.
The core of the analysis of \Cref{algo:zz-protocol} is captured by the following lemma.

\begin{lemma}\label{lem:composition}
There is $c>0$ such that the following holds. Suppose that $G$ and $H$ are regular graphs as above satisfying that:
\begin{enumerate}
    \item The graph $G$ has an $(\eps_1,\nu_1)$-edge-tolerant routing protocol for every permutation on $V(G)$ with work complexity $W_1$ and round complexity $R_1$,
    \item The graph $H$ has $\deg(G)$ vertices, and  there is a collection of protocols $\cP_H = \{P(a,b)\}_{a,b\in V(H)}$ between all pairs of vertices, such that for any adversary corrupting at most $\eps_2$-fraction of $E(H)$:
    \begin{enumerate}
        \item At most $\nu_2$-fraction of the vertices of $V(H)$ are doomed.
        \item The set of protocols $\cP$ have work and round complexity $W_2$ and $R_2$ respectively.
    \end{enumerate}
\end{enumerate}
If $\nu_2 \leq c$, then for every permutation $\pi$ on $V(Z)$, \Cref{algo:zz-protocol} produces $\cR=\{R((u,a),\pi(u,a))\}$ a set of protocols on the graph $Z=G \rcirc H$ that are $(\eps,\nu)$-tolerant, for all $\eps \lll \min(c, \eps_2^2, (\eps_1-O(\nu_2))^2)$ and $\nu \lll O(\sqrt{\eps}+\nu_1+\nu_2)$. All protocols in $\cR$ have work complexity $O(W_1 W_2)$ and round complexity $O(R_1 R_2)$. Furthermore, if the protocols on $G$ and $H$ are polynomial time constructible then~\Cref{algo:zz-protocol} also runs in polynomial time.
\end{lemma}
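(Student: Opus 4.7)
The plan is to analyze Algorithm 1 by decomposing the $\eps$-fraction of corrupted edges of $Z$ into corruptions inside clouds (controlled via the tolerance of $H$) and corruptions across clouds (controlled via the tolerance of $G$). The starting observation is that, because $|V(H)| = \deg(G)$, the replacement product has exactly $n\cdot |E(H)|$ inside-cloud edges and $n\cdot |E(H)|$ across-cloud edges, so an $\eps$-fraction adversary on $E(Z)$ corrupts at most a $2\eps$-fraction of each class. I will call a cloud $C_v$ \emph{good} if at most $\sqrt{\eps}$-fraction of its internal edges are corrupted; by Markov at most $O(\sqrt{\eps})$-fraction of clouds are bad, and since $\eps \lll \eps_2^2$ implies $\sqrt{\eps}\leq \eps_2$, the tolerance of $\cP_H$ applies within every good cloud and yields at most a $\nu_2$-fraction of doomed vertices there.

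Next, I declare a super-edge $e=(v,w)\in E(G)$ \emph{bad} if either (i) at least one of $C_v, C_w$ is a bad cloud, (ii) the routing endpoint $(v,e_1)$ is doomed in $C_v$ or $(w,e_2)$ is doomed in $C_w$, or (iii) more than half of the $\deg(H)$ parallel edges between $(v,e_1)$ and $(w,e_2)$ are corrupted. Summing the three contributions, using Markov for (iii), the fraction of bad super-edges is $O(\sqrt{\eps}+\nu_2)$. A direct check shows that for every \emph{good} super-edge, Protocol 1 transports any input vector $M_{\mathrm{in}}$ that is constant equal to some $\sigma$ on the non-doomed vertices of $C_v$ to an output vector $M_{\mathrm{out}}$ that is constant equal to $\sigma$ on the non-doomed vertices of $C_w$: the first majority at $(v,e_1)$ recovers $\sigma$ from at least a $(1-2\nu_2)$-fraction of equal incoming messages (which needs $\nu_2\leq c < 1/4$), the parallel-edge step is a majority vote over $\deg(H)$ wires of which more than half are uncorrupted, and the final broadcast through $C_w$ invokes the tolerance of $\cP_H$ again.

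With this cloud-level guarantee in hand, the rest reduces to the edge tolerance of $G$. Assuming $O(\sqrt{\eps}+\nu_2)\leq \eps_1$, i.e., $\eps \lll (\eps_1-O(\nu_2))^2$, for each of the $d$ matchings $\pi_i$ the edge-tolerance of $G$ applied to the adversary that corrupts exactly the bad super-edges guarantees that at most $\nu_1 n$ of the protocols in $\cP_i=\{R(u,\pi_i(u))\}$ fail. I then claim, by induction on the simulated round $t\in\{1,\dots,T\}$, the invariant: for every good cloud $C_v$ appearing in the successful execution of $R(u,\pi_i(u))$, all non-doomed vertices of $C_v$ share the same correct transcript $\text{history}_t$. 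The base case is the initial broadcast of $m$ from $(u,a)$ within $C_u$ via $\cP_H$, which succeeds as long as $(u,a)$ is non-doomed in the good cloud $C_u$. The inductive step combines the cloud-to-cloud guarantee across each good super-edge with the fact that non-doomed vertices in a good cloud compute identical values of $\out_t$ from identical histories. At the end, a target $\pi(u,a)$ that is non-doomed in the good cloud $C_{\pi_i(u)}$ retrieves the correct $\out_T$, and hence $R((u,a),\pi(u,a))$ succeeds.

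Finally I count failures. The pairs $((u,a),\pi(u,a))$ failing the good-cloud/non-doomed condition at either endpoint number at most $O(\sqrt{\eps}+\nu_2)|V(Z)|$ (using that $\pi$ is a permutation to handle both endpoints symmetrically). Summing $\nu_1 n$ failures per matching across the $d$ matchings contributes another $\nu_1 |V(Z)|$, giving the claimed $\nu \lll \sqrt{\eps}+\nu_1+\nu_2$. The work and round complexity claims follow by inspection of Algorithm 1: each simulated round of $R$ uses $O(1)$ invocations of $\cP_H$ together with local majority votes, yielding $O(R_1 R_2)$ rounds and $O(W_1 W_2)$ total work. I expect the main obstacle to be formulating the cloud-level invariant with enough care that it survives all $T=R_1$ rounds despite the persistent $\nu_2$-fraction of doomed vertices in each good cloud; the assumption $\nu_2\leq c$ is exactly what is needed so that the majority-of-majorities pattern at the routing vertices and at the final destination never flips under the doomed-vertex noise.
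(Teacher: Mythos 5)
Your overall structure mirrors the paper's proof exactly: the same $2\eps$ split between inside- and across-cloud edges, the same Markov argument defining good clouds, the same three-part definition of bad super-edges and the $O(\sqrt{\eps}+\nu_2)$ bound on them, the same cloud-to-cloud transfer claim (matching \Cref{claim:good-super-edge}), and the same final union bound over the two doomed-endpoint events plus the tolerance of $G$ across the $d$ matchings. The edge-count observation, the $\nu_2\le c$ requirement for majorities, and the $O(R_1R_2)$, $O(W_1W_2)$ complexity accounting also line up with the paper.

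The one genuine gap is in your inductive invariant, and it is exactly the place you flagged as the ``main obstacle.'' You claim that for every good cloud $C_v$, all non-doomed vertices share \emph{the same correct transcript} $\mathrm{history}_t$. This is too strong and, taken literally, false. Consider a bad super-edge $e=(v,w)$ incident on a \emph{good} cloud $C_w$ where $e$ is bad because $(w,e_2)\in\cD_w$: the final broadcast step of \Cref{algo:cloud-to-cloud} runs the protocols $P(e_2,c)$ from a doomed source, and these can deliver \emph{different} values to different non-doomed $c\in C_w$. So the non-doomed vertices of a good cloud do \emph{not} in general agree on their transcripts, nor is there a single ``correct'' transcript to compare against, because the adversary's effective behavior on a bad super-edge is not even a well-defined single message in $G$. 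Your invariant cannot survive the induction as written, and there is no obvious way to restrict to ``clouds appearing in the successful execution'' that avoids this: a successful run of $R(u,\pi_i(u))$ may well receive and then discard inconsistent garbage on an edge of $\cS$.

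The paper resolves this by introducing an auxiliary protocol $R_\cS$ on $G$ in which every message across an edge of $\cS$ is replaced by an erasure symbol $\star$, and the out-functions are extended so that a $\star$ propagates whenever the output is not uniquely determined by the non-erased transcript. Two things then hold: (i) since $R$ is tolerant to \emph{any} adversarial fixing of messages on $\cS$, the output $M_T(\pi_i(u))$ of $R_\cS$ is the correct message $m$ (it cannot be $\star$); and (ii) one proves by induction that for every non-doomed vertex $(v,b)$ in a good cloud, $\mathrm{history}_t(v,b)$ agrees with $\mathrm{history}_t(v)$ of $R_\cS$ \emph{only on the non-$\star$ coordinates}, and similarly for the outgoing messages. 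This weaker, position-selective invariant is robust to the inconsistency above, because the positions where the simulation can disagree are exactly the $\star$ positions, and those are provably irrelevant to the final output. Without this erasure device (or an equivalent way of quotienting out the coordinates fed by bad super-edges), your step from ``at most $\nu_1$-fraction of the $R(u,\pi_i(u))$ fail'' to ``at most $\nu_1$-fraction of the simulated protocols fail'' does not go through. Everything else in your proposal is sound.
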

The rest of this section is devoted for the proof of Lemma~\ref{lem:composition}.

Fix a permutation $\pi: V(Z)\rightarrow V(Z)$. As shown in the first two steps of \Cref{algo:zz-protocol}, $\pi$ corresponds to $d=\deg(G)$ many permutations $\pi_1,\ldots,\pi_d$ on $V(G)$. We will now prove that the protocols generated are tolerant against a constant fraction of edge-corruptions in $Z$.

\vspace{-2ex}
\paragraph{The outer protocol over clouds:} As in~\Cref{algo:zz-protocol}, fix the set of routing protocols $\cP_i = \{P(u,\pi_i(u)))\}$ over $G$, each of round complexity $R_1$ and work complexity $W_1$.

\vspace{-2ex}
\paragraph{The inner protocol inside clouds:} 
We fix the set of protocols  $\cP_{H} = \{P(a,b)\}_{a,b\in H}$, as in \Cref{algo:zz-protocol}, between all pairs of vertices $v,w$ in $H$. Since the graph inside every cloud $C_u$ is isomorphic to $H$, we can use these protocols for message transfers between the vertices of $C_u$.

\vspace{-2ex}
\paragraph{Bad super-edges and bounding them:}
We now begin the error analysis of \Cref{algo:zz-protocol}, and for that we need to introduce a few notions. First, fix any set of corrupted edges $\cE \subset E(Z)$ with measure at most $\eps$, which satisfies the assumptions in the lemma statement. A cloud $C_v$ is called bad if it contains too many corrupted edges, and more precisely, if $\Pr_{e\sim E(C_v)}[e \in \cE] \geq \sqrt{2\eps}$; otherwise, we say $C_v$ is good. Since the edges inside clouds constitute at least $1/2$ of the overall number of edges, by Markov's inequality at most $\sqrt{2\eps}$-fraction of the clouds $C_v$ are bad.

Let us now fix a set of doomed vertices for every cloud $C_v$, that is, the smallest set of vertices $\cD_v$ such that for every pair of vertices $a,b\notin \cD_v$, the internal protocol $P(a,b)$ can successfully transfer a message from $(v,a)$ to $(v,b)$. If $C_v$ is a good cloud we know that the  fraction of corrupted edges inside it is at most $\sqrt{2\eps}$ which is smaller than $\eps_2$. Since the induced subgraph over any cloud $C_v$ is isomorphic to $H$, the tolerance guarantees of $H$ imply that at most $\nu_2$-fraction of the protocols $P(a,b)$ fail, which gives that the fractional size of $\cD_v$ is at most $\nu_2$.

A super-edge $e\in E(G)$ with endpoints $v,w$ is said to be corrupted if either $C_v$ or $C_w$ is a bad cloud, $(v,e_1)\in \cD_v$,  $(w,e_2)\in \cD_w$, or at least $\sqrt{2\eps}$-fraction of the (parallel) edges between $(v,e_1)$ and $(w,e_2)$ are corrupted. Using Markov's inequality and a union bound we get that   
\begin{equation}\label{eq:corrupted_super_edge}
\Pr_{e \sim E(G)}[e \text{ is corrupted}] \lll \eps^{1/2}+\nu_2.  
\end{equation}

\vspace{-2ex}
\paragraph{Cloud to Cloud Transfer on a Good Super-edge:} the following claim asserts that the cloud-to-cloud transfer works well on an uncorrupted super edge. More precisely:
\begin{claim}\label{claim:good-super-edge}
Fix any uncorrupted super-edge $e\in E(G)$ with endpoints $v,w$. Suppose the cloud-to-cloud transfer in \Cref{algo:cloud-to-cloud} is run with the edge $e$ and a message vector $M_{in}: C_v \to \Sigma$ satisfying  $M_{in}(b)=\sigma$ for all $b \in C_v\setminus \cD_v$. Then the output of the protocol is $M_{out}: C_w \to \Sigma$ satisfying $M_{out}(c)=\sigma$ for all $c \in C_w \setminus \cD_w$.    
\end{claim}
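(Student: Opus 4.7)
The plan is to walk through the three steps of \Cref{algo:cloud-to-cloud} in order, arguing at each step that the ``correct'' symbol $\sigma$ remains the majority at every relevant location. The ingredients I would use are: (i) the definition of the doomed sets $\cD_v$ and $\cD_w$ together with the hypothesis that $(v,e_1)\notin \cD_v$ and $(w,e_2)\notin \cD_w$; and (ii) the assumption that $e$ is an uncorrupted super-edge, which in particular gives that fewer than $\sqrt{2\eps}$-fraction of the $\deg(H)$ parallel edges between $(v,e_1)$ and $(w,e_2)$ lie in $\cE$.

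First I would analyze step 1, the propagation inside $C_v$. By the definition of $\cD_v$, for every $b\in C_v\setminus \cD_v$ the internal protocol $P(b,e_1)$ correctly transfers $M_{in}(b)=\sigma$ to $(v,e_1)$, since $(v,e_1)\notin \cD_v$ as well. Hence $(v,e_1)$ receives $\sigma$ from at least a $(1-\nu_2)$-fraction of the vertices in $C_v$; since the constant $c$ in~\Cref{lem:composition} will be chosen so that $\nu_2\leq c<1/2$, the majority vote at $(v,e_1)$ equals $\sigma$. I would then turn to step 2 and observe that $(v,e_1)$ sends $\sigma$ along each of the parallel edges to $(w,e_2)$, of which at most $\sqrt{2\eps}<1/2$-fraction are corrupted (using the bound on $\eps$), so the majority received at $(w,e_2)$ is again $\sigma$. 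Step 3 is symmetric to step 1: since $(w,e_2)\notin \cD_w$, it broadcasts $\sigma$ via the protocols $\{P(e_2,c)\}_{c\in V(H)}$, and by definition of $\cD_w$, for every $c\in C_w\setminus \cD_w$ this broadcast succeeds, yielding $M_{out}(c)=\sigma$.

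I do not expect any serious obstacle; the claim is essentially a bookkeeping argument about majority votes, and the only thing to verify is that the thresholds $\nu_2<1/2$ and $\sqrt{2\eps}<1/2$ are both enforced by the quantitative hypotheses of~\Cref{lem:composition}. The substantive work of the composition argument—the Markov-type counting that shows there are few bad clouds and few bad super-edges—has already been carried out above in setting up the notion of an uncorrupted super-edge, so~\Cref{claim:good-super-edge} is a clean one-shot verification that, under that local goodness condition, the cloud-to-cloud sub-routine behaves exactly as intended.
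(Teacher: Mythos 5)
Your proof is correct and follows the paper's argument essentially verbatim: the same decomposition into the three transfers $C_v \to (v,e_1)$, $(v,e_1)\to(w,e_2)$, and $(w,e_2)\to C_w$, each analyzed via the same majority-vote reasoning, using exactly the facts packaged into the definition of an uncorrupted super-edge ($(v,e_1)\notin\cD_v$, $(w,e_2)\notin\cD_w$, $C_v,C_w$ good so $|\cD_v|,|\cD_w|\leq\nu_2<1/2$, and fewer than $\sqrt{2\eps}<1/2$ of the parallel edges corrupted). No gap.
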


\begin{proof}
Let $e$ correspond to the vertices $(v,e_1)$ in $C_v$ and $(w,e_2)$ in $C_w$. The proof of this claim is broken into three message transfers: from $C_{v}$ to $(v,e_1)$, from $(v,e_1)$ to $(w,e_2)$, and finally from $(w,e_2)$ to $C_w$, and we argue about each step separately. Firstly, if $e_1$ is not in $\cD_v$, then it will receive the value $\sigma$ from all vertices $C_v \setminus \cD_v$, which is at least $1-\nu_2 \geq 1/2$. Therefore $(v,e_1)$ will compute the correct majority and set its outgoing message as $\sigma$. 

Next, for the message transfer between $(v,e_1)$ and $(w,e_2)$, since at most $\sqrt{\eps}$-fraction of the edges between them are corrupted, the vertex $(w,e_2)$ will receive $\sigma$ on at least $1/2$-fraction of the edges, thus setting its outgoing message as $\sigma$.

Finally for the message transfer between $(w,e_2)$ to $C_w$, every vertex in $C_w \setminus \cD_w$ receives the message $\sigma$ from $(w,e_2)$ since $e_2 \notin \cD_w$, thus proving the claim. 
\end{proof}

\vspace{-2ex}
\paragraph{Analysis of the Protocol $R((u,a),\pi(u,a))$:} Let $\cS$ be the set of corrupted super-edges. Suppose that $((u,a),\pi(u,a))$ is associated to the protocol $R(u,\pi_i(u))$. We argue that the protocol $R((u,a),\pi(u,a))$ is a correct simulation of the protocol $R(u,\pi_i(u))$ on $G$ when run with the set of corrupted edges $\cS$. Formally, we show:
\begin{claim}\label{claim:simulation}
Fix a set $\cE\subseteq E(Z)$ and let $\cS\subseteq E(G)$ be the set of corrupted super edges as above. Suppose the vertices $(u,a)$ and $\pi(u,a)$ satisfy that $a \notin \cD_u$,  $\pi(u,a)_2 \notin \cD_{\pi_i(u)}$ and the protocol $R(u,\pi_i(u))$ is successful in the message transfer from $u \rightarrow \pi_i(u)$ 
with corruption set $\cS$ under any adversarial strategy on it. Then \Cref{algo:zz-protocol} is successful in transferring a message from $(u,a)$ to $\pi(u,a)$ with corruption set $\cE$ under any adversarial behavior on it.
\end{claim}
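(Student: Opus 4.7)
The plan is to prove the claim by showing that the simulation of $R = R(u, \pi_i(u))$ on $Z$ mirrors an execution of $R$ on $G$ against a specific adversarial strategy $\cA$ supported on $\cS$, and then invoke the hypothesis that $R$ succeeds against every such $\cA$. The strategy $\cA$ is extracted from the simulation: at each round $t$ and each corrupted super-edge $e = (v, w) \in \cS$, $\cA$'s message on $e$ is the value that the cloud-to-cloud transfer (together with the adversary on $\cE$) actually delivers into the non-doomed vertices of $C_w$, taking a majority over those vertices to extract a single value when needed. Good clouds then act as ``virtual vertices'' of $G$, mimicking what $R$ does on the corrupted graph $(G, \cS)$.

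The core of the argument is the following invariant, proved by induction on $t \in \{1, \ldots, T\}$: for every good cloud $C_v$ and every non-doomed $(v, b)$ with $b \notin \cD_v$, the string $\text{history}_t(v, b)$ equals $v$'s round-$t$ transcript in the execution of $R$ on $G$ under $\cA$. The base case $t = 1$ follows from the initial broadcast: $(u, a)$ sets $\text{history}_1(u, a) = m$ and transmits $m$ to $C_u$ using $\{P(a, b)\} \subseteq \cP_H$; since $a \notin \cD_u$, the definition of $\cD_u$ guarantees $P(a, b)$ succeeds for every $b \notin \cD_u$, so $\text{history}_1(u, b) = m$, while every $(v, b)$ with $v \neq u$ starts with an empty history matching $v$ in $R$. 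For the inductive step, the invariant at round $t$ implies that the non-doomed vertices of a good cloud $C_v$ compute identical outgoing message vectors whose value equals the message $v$ sends in $R$; for an uncorrupted super-edge $e = (v, w)$, \Cref{claim:good-super-edge} propagates this value unchanged to all non-doomed of $C_w$, while for $e \in \cS$ the value delivered is by construction the message $\cA$ sends along $e$ in $R$. Appending these incoming messages over all $e \in E_w(G)$ reproduces $w$'s round-$(t+1)$ transcript. Invoking the invariant at $t = T$ together with the hypothesis that $R$ succeeds under $\cA$ shows that every non-doomed $(\pi_i(u), c)$ outputs $m$; since $\pi(u, a)_2 \notin \cD_{\pi_i(u)}$, the target vertex $\pi(u, a)$ outputs $m$ as required.

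The main obstacle I expect is the inductive step for corrupted super-edges with $e_2 \in \cD_w$: the broadcast $(w, e_2) \to C_w$ inside $C_w$ uses protocols $P(e_2, \cdot)$, which the definition of $\cD_w$ does not guarantee to succeed to every $c \notin \cD_w$, so non-doomed vertices of $C_w$ could receive different messages along $e$ and the ``unanimous cloud'' invariant could fail. To handle this, I would relax the invariant to track only a large sub-collection of non-doomed vertices in each cloud, defining $\cA$'s message on $e$ as the majority value delivered and absorbing the small set of disagreeing non-doomed vertices into an enlarged, round-dependent doomed set $\cD_w^{(t)}$. Controlling the growth of $\cD_w^{(t)}$ using the bound on $|\cS|$ and the fact that $R$ succeeds against \emph{every} adversary on $\cS$ should ensure that $\pi(u, a)_2 \notin \cD_{\pi_i(u)}^{(T)}$, which suffices for the conclusion.
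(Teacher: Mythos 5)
Your high-level plan---mirror the simulation on $Z$ onto an execution of $R$ on $G$ and then invoke the hypothesis that $R$ beats every adversary supported on $\cS$---is the right instinct, and your handling of the base case and of uncorrupted super-edges (via \Cref{claim:good-super-edge}) matches the paper. But the core of your proposal has a genuine gap, which you half-notice yourself. Trying to \emph{extract a single adversary} $\cA$ on $\cS$ from the simulation forces you into requiring that all non-doomed vertices of each cloud hold identical transcripts, and this simply fails: on a corrupted super-edge $e=(v,w)$ with $e_2\in\cD_w$, the final broadcast $(w,e_2)\to C_w$ uses protocols $P(e_2,c)$ that may fail \emph{differently} for every single $c\notin\cD_w$, so essentially the entire cloud $C_w$ can disagree after one such edge. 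Your proposed repair---defining $\cA$ via majority over the received values and absorbing dissenters into a round-dependent set $\cD_w^{(t)}$---does not get off the ground: there is no majority to speak of in this scenario, and the size of the polluted set is not controlled by $|\cS|$ nor by the resilience of $R$ (a single bad super-edge can poison an entire cloud in one round). The claim ``$R$ succeeds against every adversary on $\cS$'' is a property of $R$, not of your simulation, and does not bound the growth of $\cD_w^{(t)}$.

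The paper sidesteps this entirely by \emph{not} committing to a single adversary. It defines an auxiliary protocol $R_\cS$ in which every corrupted super-edge transmits the erasure symbol $\star$, and the outgoing message on a non-corrupted edge $e$ at round $t$ is defined to be $\sigma$ only if \emph{every} filling of the $\star$-positions of the local transcript yields $\sigma$ (otherwise it is $\star$ too). The inductive invariant is then only that $\mathrm{history}_t(v,b)$ agrees with $\mathrm{history}_t(v)$ of $R_\cS$ \emph{on the non-$\star$ coordinates}; non-doomed vertices of the same cloud are free to hold wildly different garbage on the $\star$-coordinates. Since the outgoing messages of $R_\cS$ are, by construction, invariant to any filling of the $\star$'s whenever they are non-$\star$, all non-doomed vertices automatically compute the same outgoing value in that case, and the induction closes without any unanimity assumption. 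Finally, the hypothesis that $R$ succeeds under \emph{every} adversary on $\cS$ is exactly what ensures $M_T(\pi_i(u))$ is independent of the $\cS$-influenced coordinates, so it is non-$\star$ and equals $m$. This erasure device is the missing ingredient in your argument; it is what lets a cloud remain usable even when its members disagree internally.
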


\begin{proof}
Henceforth, suppose that $a\notin \cD_u$ and let $m$ be the message that $(u,a)$ wishes to send to $\pi_i(u,a)$. Our analysis will compare the transcript of $R((u,a),\pi(u,a))$ with the transcript of $R(u,\pi_i(u))$, denoted henceforth by $R$, and show that they are equal up to the adversarial behavior of the edges in $\cS$. To do so, we will define a protocol $R_\cS$ which is the same as $R$, except the messages sent across $\cS$ are ``erased''. Then we will show that the transcript of $R((u,a),\pi(u,a))$ matches the transcript of $R_\cS$ on the locations that haven't been erased.

Let us start by defining $R_\cS$. In this protocol, the message that $v$ sends across any edge $e$ in $\cS$ is $\star$, which is then received as $\star$ on the other end. Furthermore, the outgoing message that $v$ sends on an edge $e$ not in $\cS$ is computed by taking into account the erasures on its transcript from the previous rounds -- it sends $\sigma$ on $e$, if on every setting of the erased indices in its transcript, $R$ would have sent out $\sigma$; otherwise $v$ sends a $\star$. Formally, let the protocol $R=R(u,\pi_i(u))$ be specified by the functions $\{\out_t: 2E(G) \times \Sigma^* \to \Sigma^* \cup \{\perp\}\}$, that is, $\out_t(v,e,H)$ is the message that the vertex $v\in G$ sends across the edge $e \in E_v(G)$ at round $t$ when given the transcript $H$ of the previous rounds. We extend the definition of the function $\out_t(v,e,H)$ to the setting where some values in $H$ might be $\star$'s: $\out_t(v,e,H)$ is defined to be $\sigma$ if it equals $\sigma$ for all fixings of the $\star$-locations, and $\star$ otherwise. Let $\text{history}_t(v)$ denote the transcript of $R_\cS$ (when run on the initial message $m$ at $u$) at $v$ up to round $t$, and let $M_t(v,e)$ denote the message that $v$ sends across $e$ at round $t$ in $R_\cS$. Concretely we have the following inductive definition for these strings, 
\[\text{history}_1(v)=\begin{cases}
m ~~~~\text{if $v=u$,} \\
\emptyset ~~~\text{otherwise.}
\end{cases},
\]   
\[
\text{history}_t(v)=\text{history}_{t-1}(v)\circ_{e \in E_v(G) \text{ with endpoint }w} M_{t-1}(w,e),
\]
and finally for $t\neq T$,
\[M_t(v,e)=\begin{cases}
\star ~~~~~~~~\text{if } e \in \cS,\\  
\out_t(v,e,\text{history}_t(v)) ~~~ \text{otherwise,}
\end{cases}
\] 
and 
\[M_T(\pi_i(u))=\out_t(\pi_i(u),\text{history}_T(\pi_i(u))).
\]

Next, we argue that the transcript of $R((u,a),\pi(u,a))$ matches the transcript of $R_\cS$ on the non-$\star$ locations. Specifically, consider the transcript $\text{history}_t(v,b)$ and the outgoing messages $\out_t(v,e,\text{history}_t(v,b))$ at any vertex $(v,b)$, where $b\notin \cD_v$, from the protocol $R((u,a),\pi(u,a))$ in~\Cref{algo:zz-protocol}. We argue inductively that for all rounds $t$, $\text{history}_t(v,b)$ equals $\text{history}_t(v)$ on all the locations that are not $\star$. Furthermore, the outgoing message $\out_t(v,e,\text{history}_t(v,b))$ equals $M_t(v,e)$, when the latter is not equal to $\star$. 

{\bf Base case:} For $t=1$, we only need to check the claim for vertices in $C_u$, since every other vertex starts out with an empty transcript. Since $a \notin \cD_u$, it is the case that every $b\in C_u$ that is also not doomed receives $m$ from $(u,a)$ and sets $\text{history}_1(u,b)=m$ which also equals $\text{history}_1(u)$, as required. Such a vertex sets its outgoing message $\out_1((u,b),e)$ on an edge $e$, to $\out_1(u,e,m)$. When $e\notin \cS$, by definition this equals $M_1(u,e)$, thus proving the claim for $t=1$. 

{\bf Inductive step:} We now assume that the hypothesis holds for all rounds $t \leq j$ and prove it for $t=j+1$. Fix a vertex  $(v,b)$ where $b\notin \cD_v$. By the inductive hypothesis, we know that $\text{history}_{j}(v,b)=\text{history}_j(v)$, where the equality holds for all coordinates where the latter string is not equal to $\star$. For any edge $e$ with endpoints $v,w$ that is not in $\cS$ and  $M_j(w,e)\neq \star$, by the inductive hypothesis, the cloud-to-cloud protocol for $C_{w} \rightarrow C_v$ is instantiated with the vector $M=[\out_j((w,c),e)]$ satisfying $M[c]=M_j(w,e)$ for every $c \notin \cD_{w}$. Since $e \notin \cS$, \Cref{claim:good-super-edge} implies that the cloud-to-cloud transfer is successful, and in particular, the vertex $(v,b)$ receives the message $M_j(w,e)$, which gives us that $\text{history}_{j+1}(v,b)=\text{history}_{j+1}(v)$ on the locations that are not $\star$. Now we know that the outgoing message that $(v,b)$ sends on any edge $e$ is $\out_{j+1}((v,b),e,\text{history}_{j+1}(v,b))$ and is set to $\out_t(v,e,\text{history}_{j+1}(v,b))$. Since the transcript at $(v,b)$ is equal to the transcript at $v$ on all non-$\star$ locations, when $\out_t(v,e,\text{history}_{j+1}(v))$ is not equal to $\star$, then $\out_t(v,e,\text{history}_{j+1}(v,b))=\out_t(v,e,\text{history}_{j+1}(v))$. This in turn equals $M_{j+1}(v,e)$, when the latter is not equal to $\star$, thus proving the inductive claim.

Now note that by the assumption in the lemma, $R$ is successful in the message transfer from $u$ to $\pi_i(u)$, regardless of the adversarial behavior of edges in $\cS$. Therefore the received message $M_T(\pi_i(u))$ in the protocol $\cR_S$ is equal to $m$.
Using the inductive claim above we get that the transcripts of $R((u,a),\pi(u,a))$ match $\cR_S$ on the non-$\star$ locations, and in particular at the final round on $\pi_i(v)$. That is, every vertex $(\pi_i(v),b)$ where $b \notin \cD_{\pi_i(v)}$, sets $\out_T(\pi_i(v),b)$ as $m$. Since $\pi(u,a)_2$ is not in $\cD_{\pi_i(v)}$ we get that $\out_T(\pi_i(v),\pi(u,a)_2)$ equals $m$, implying that the vertex $\pi(u,a)$ successfully receives $m$ at the end of the protocol.
\end{proof}

\paragraph{Bounding the fraction of failed transmissions from $(u,a)$ to $\pi(u,a)$:}
First note that the distribution over the protocol on $G$ that is associated to the pair $((u,a),\pi(u,a))$ for a uniformly random $(u,a)\sim Z$, is the distribution over protocols $R(u,\pi_i(u))$ for uniformly random $i\sim [d],u\sim G$. Secondly, by~\eqref{eq:corrupted_super_edge} we have $\Pr_{e\in E(G)}[e\in \cS]\lll \sqrt{\eps}+\nu_2\leq \eps_1$. By the tolerance guarantees of $G$, we get that for every $i$, at most $\nu_1$-fraction of the protocols $\{R(u,\pi_i(u))\}_{u\in G}$ fail when run with the corrupted set of edges $\cS$. We can now apply \Cref{claim:simulation} to bound the fraction of failed transmissions between $(u,a)$ and $\pi(u,a)$ using a union bound,
\begin{align*}
&\Pr_{(u,a)\in Z}[\text{the message transfer from }(u,a) \rightarrow \pi(u,a) \text{ fails}]\\
&\leq \Pr_{(u,a)\in Z}[a \in \cD_u]+ \Pr_{(u,a)\in Z}[\pi(u,a)_2 \in \cD_{\pi(u,a)_1}] + \Pr_{i\in [d], u \in V(G)}[R(u,\pi_i(u))\text{ fails with corruption set }\cS]\\
&\lll \eps^{1/2}+\nu_1+\nu_2.
\end{align*}
This completes the proof of Lemma~\ref{lem:composition}.
\qed

\begin{remark}\label{rem:composition}
\Cref{lem:composition} as stated composes graphs $G$ and $H$ where the number of vertices in $H$ is equal to the degree of $G$. This lemma can easily be adapted to the setting where the number of vertices in $H$ is at least $\deg(G)$ and at most $O(\deg(G))$, 
 and our application requires this extension. The argument is essentially the same but involves some notational inconvenience. In this case, we modify the graph product so that each cloud contains $|V(H)|$ vertices, but only $\deg(G)$ of them are associated with edges of $G$. In particular, the ratio between the number of cross cloud edges and inner cloud edges is no longer $1$, and is instead $O(1)$. This leads to the fraction of bad super-edges being larger by a constant factor, leading to a change in the tolerance parameters by constant factors too. 
\end{remark}

	\section{Routing Network with Constant Tolerance and Constant Degree}
In this section we use Lemma~\ref{lem:composition} to prove Theorem~\ref{thm:main}.

\subsection{A Routing Network with Polylogarithmic Degree}
We first state the construction of the routing network from~\cite[Lemma D.3]{BMV24} that is based on the high-dimensional expanders constructed in~\cite{LSV1, LSV2}. 
\begin{theorem}\label{thm:routing-polylog}
For all $n \in \mathbb{N}$ there exists a regular graph  $G = (V, E)$ (with multiedges) on $\Theta(n)$ vertices with degree $\pl n$ such that for all permutations $\pi$ on $V(G)$, there is a set of routing protocols $\cR=\{R(u,\pi(u))\}_{u\in G}$ that is $(\eps,O(\eps))$-edge-tolerant for all $\eps > 0$, with round complexity $O(\log n)$ and work complexity $\pl n$. Furthermore, both the graph and the routing protocols can be constructed in time $\poly(n)$. 
\end{theorem}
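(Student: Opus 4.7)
The plan is to invoke the construction of Bafna, Minzer, and Vyas (Lemma D.3 of~\cite{BMV24}) essentially verbatim, since the statement of~\Cref{thm:routing-polylog} is a direct restatement of their result. To describe what goes into that proof, I would take a bounded-degree high-dimensional expander $X$ coming from the algebraic construction of Lubotzky--Samuels--Vishne~\cite{LSV1,LSV2}: this yields a simplicial complex on $\Theta(n)$ top faces whose underlying graph $G$ (a suitable $1$-skeleton or link graph) is regular of degree $\pl n$, has diameter $O(\log n)$, and admits polynomial-time constructions of both $X$ and $G$.

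For the routing protocols associated with a permutation $\pi$, the plan is to route each message from $u$ to $\pi(u)$ along many short, almost edge-disjoint paths simultaneously and have $\pi(u)$ decode by a majority vote over the received copies. The natural choice is to draw the paths from random walks (or swap walks) on $X$, since these walks are highly pseudorandom and their edge distribution is close to uniform on $E(G)$. The round complexity $O(\log n)$ is dictated by the path length, and the work complexity $\pl n$ comes from the degree and the $\pl n$ copies of the message being aggregated at each intermediate vertex.

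The tolerance analysis is where the HDX structure is essential. If an adversary corrupts an $\eps$-fraction of the edges, then by the pseudorandomness of swap walks on $X$ the expected fraction of routing paths passing through any single corrupted edge is $O(\eps)$, and crucially the events of different paths hitting corrupted edges are nearly independent in the spectral sense. Combining an expander-mixing-style bound for walks between edges and pairs with a concentration argument over the $\pl n$ parallel paths shows that for all but an $O(\eps)$-fraction of the pairs $(u,\pi(u))$, a strict majority of paths carry the correct message, yielding the claimed $(\eps,O(\eps))$-tolerance.

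The main obstacle, if we were proving this from scratch rather than citing~\cite{BMV24}, would be establishing the \emph{linear} dependence $\nu = O(\eps)$ uniformly over all permutations: a naive union bound over pairs and paths loses factors of $\pl n$, so one genuinely needs the sharp high-dimensional expander-mixing estimates for swap walks on $X$, together with a careful choice of the path distribution that guarantees good edge coverage regardless of the permutation $\pi$. Since~\cite{BMV24} already carries out this analysis and the statement matches Lemma D.3 there, the proof at our level reduces to citing that result.
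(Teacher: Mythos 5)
Your proposal is correct and matches the paper's treatment exactly: Theorem~\ref{thm:routing-polylog} is stated in the paper as a direct import of Lemma~D.3 from~\cite{BMV24}, with no further proof supplied, and you correctly identify citation as the intended move. The additional color you give about what goes into~\cite{BMV24}'s argument (an LSV-type high-dimensional expander, routing along many pseudorandom short paths with majority decoding, and a mixing/concentration analysis to get the linear $\nu = O(\eps)$ dependence uniformly over permutations) is not something this paper spells out, but it is a fair and accurate sketch of the cited result's internals.
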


For the composition we need a set of protocols between all pairs of vertices of $G$, such that if some small fraction of edges behave adversarially, then only few vertices are doomed. This is an immediate corollary of~\Cref{thm:routing-polylog}.
\begin{proposition}\label{prop:perm-to-all-pairs}
Suppose that $G$ is a graph such that for every permutation $\pi$, there is an $(\eps,\nu)$-edge-tolerant protocol with $\nu \leq 0.01$, work complexity $W$ and round complexity $R$. Then one can construct in polynomial time a set of protocols $\cP_G=\{P(a,b)\}_{a,b\in G}$ with work complexity $W|V(G)|$ and round complexity $O(R)$, such that if at most $\eps$-fraction of the edges of $G$ are corrupted, then at most $\sqrt{\nu}$-fraction of the vertices of $G$ are doomed. 
\end{proposition}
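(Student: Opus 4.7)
The plan is to construct $\cP_G$ by an amplification-via-intermediates strategy on top of the permutation-model protocols. First, I would decompose the edge set of the complete graph on $V(G)$ into $k=O(|V(G)|)$ perfect matchings $M_1,\ldots,M_k$ (padding with a dummy vertex if $|V(G)|$ is odd). Each $M_i$ defines an involution permutation $\pi_i$ on $V(G)$; invoking the hypothesis on each $\pi_i$ yields an $(\eps,\nu)$-edge-tolerant permutation protocol $\{R_i(u,\pi_i(u))\}_u$ of work $W$ and rounds $R$. For every unordered pair $\{u,v\}$ this gives a base protocol $Q(u,v):=R_i(u,v)$, where $i$ is the unique index with $\{u,v\}\in M_i$.

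I would then define $P(a,b)$ by routing through \emph{every} vertex and taking a majority. Concretely, run in parallel: (i) a broadcast phase where $a$ sends its message $m$ to each $r\in V(G)$ via $Q(a,r)$; (ii) a gather phase where each $r$ forwards what it received to $b$ via $Q(r,b)$; (iii) $b$ outputs the majority of the $|V(G)|$ strings it receives. Running the $O(|V(G)|)$ base protocols in parallel on the common graph gives round complexity $O(R)$ and per-pair work complexity $O(W|V(G)|)$, matching the statement.

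For correctness, fix an adversary corrupting at most an $\eps$-fraction of $E(G)$. By the hypothesis, for each $\pi_i$ at most $\nu|V(G)|$ of the protocols $R_i$ fail, so summing over the $O(|V(G)|)$ matchings, the total number of failing base pairs $\{u,v\}$ is $O(\nu|V(G)|^2)$. Letting $B(u)=\{v:Q(u,v)\text{ fails}\}$, we have $\sum_u|B(u)|=O(\nu|V(G)|^2)$; by Markov's inequality the set $D=\{u:|B(u)|\geq c\sqrt{\nu}|V(G)|\}$ has size $O(\sqrt{\nu})|V(G)|$ for an appropriate constant $c$. For any $a,b\notin D$, the set of ``good'' intermediates $r$ for which both $Q(a,r)$ and $Q(r,b)$ succeed has size at least $|V(G)|-|B(a)|-|B(b)|\geq (1-O(\sqrt{\nu}))|V(G)|$, which is a strict majority since $\nu\leq 0.01$. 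Hence $b$'s majority vote returns $m$, and $S=V(G)\setminus D$ is a common good set certifying that only $O(\sqrt{\nu})$-fraction of vertices are doomed.

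The main obstacle I anticipate is that the naive construction $P(a,b):=Q(a,b)$ only bounds the number of failing \emph{pairs}, and a union-of-matchings failure pattern would force any vertex cover (and hence the doomed set) to include nearly half of $V(G)$, yielding no nontrivial bound. Routing through every intermediate and taking majority is precisely the mechanism that converts the pair-failure bound into a \emph{common} good-set bound on vertices, via a two-step Markov-plus-union argument, at the cost of the $|V(G)|$ factor in work complexity that appears in the proposition.
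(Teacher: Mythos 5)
Your proposal is correct and takes essentially the same route as the paper: decompose the complete graph into matchings, invoke the permutation-model hypothesis on each matching to get base pairwise protocols, then route through every intermediate and take a majority, with a Markov argument defining the doomed set as vertices having more than roughly $\sqrt{\nu}$-fraction of failing base protocols. Your closing remark about why the naive construction fails (pair-failure bounds alone don't yield a common good set) is a nice piece of intuition that the paper leaves implicit, but it doesn't change the argument.
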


\begin{proof}
We can assume without loss of generality that $m=|V(G)|$ is even. Thus we can decompose the edge set of the complete graph on $m$ vertices into $m$ matchings/permutations $\pi_1,\ldots,\pi_m$ on $V(G)$. For every matching $\pi_i$, we let $\cR_i = \{R(u,\pi_i(u))\}_{u\in G}$ be the set of protocols routing it. 

Let $P(u,\pi_i(u))$ be the protocol that first sends $u$'s message to $w$ via $R(u,w)$ for every $w\in V(G)$ and then every $w$ sends the message received to $\pi_i(u)$ via the protocol $R(w,\pi_i(u))$. Finally $\pi_i(u)$ takes a majority over all the messages it receives. Let $\cP_G = \cup_{i\in [m]} \{P(u,\pi_i(u))\}_{u\in G}$.

It is easy to check that the protocol $P(u,\pi_i(u))$ has work complexity $W|V(G)|$, so let us analyze the fraction of doomed vertices. If at most $\eps$-fraction of the edges are corrupted then at most $\nu$-fraction of the protocols $\{R(u,v)\}_{u,v\in V(G)}$ fail. Let $\cD$ be the set of vertices $v$ for which more than $\sqrt{\nu}$-fraction of the protocols $\{R(v,w)\}$ fail. One can check that if $u,v \notin \cD$ then the protocol $P(u,v)$ successfully transfers a message from $u$ to $v$, since $v$ at least $1-2\sqrt{\nu} \geq 1/2$-fraction of the messages that reach $v$ are correct, leading to it computing the correct majority.
\end{proof}
Note that the above proposition gives us a set of fault-tolerant all pair protocols on the $N$ vertex graph from \Cref{thm:routing-polylog} with a work complexity of $N\pl N$, instead of $\pl N$. This loss is affordable though since in the composition we will only apply this set of all pair protocols on a graph with $N=\pl n$ vertices, that is, on the smaller graph in the composition.

\subsection{A Routing Network with Constant Degree and Exponential Work Complexity}
We also need~\cite[Theorem 2]{Upfal}, who showed that  constant degree expander graphs have fault-tolerant routing protocols with work complexity at most $\exp(n)$. Formally,
\begin{theorem}\label{thm:routing-constant}  
There is a universal constant $d\in \N$ such that for all $n \in \mathbb{N}$, there exists a $d$ regular graph $G = (V, E)$ on $n$ vertices such that in $\poly(n)$-time one can construct  a set of protocols $\cP_G=\{P(a,b)\}_{a,b\in G}$ with work complexity $\exp(n)$ and round complexity $O(\log n)$, such that if at most $\eps$-fraction of the edges of $G$ are corrupted, then at most $O(\eps)$-fraction of the vertices of $G$ are doomed. 
\end{theorem}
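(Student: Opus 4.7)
The plan is to prove the theorem by taking $G$ to be a constant-degree spectral expander with very strong expansion (for instance a $d$-regular Ramanujan graph for a sufficiently large constant $d$), and designing routing protocols based on flooding-with-majority, analyzed using the edge-expansion of $G$. Such graphs can be constructed deterministically in polynomial time, so the constructibility requirements are met.

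The protocol $P(a,b)$ proceeds for $T = O(\log n)$ rounds, chosen so that $T$ exceeds the diameter of $G$. In each round, every vertex $v$ appends all incoming messages to a local transcript, and computes its next outgoing messages from this transcript by an arbitrary Boolean circuit, which we allow to be of size $\exp(n)$. Conceptually, $a$ floods its message into $G$ along all paths of length $\le T$ (tagged with the pair $(a,b)$ to avoid interference across different protocols), and $b$ decodes by taking an appropriately weighted majority vote over the many path copies of $a$'s message that reach it. Because $v$'s computation in each round can, in principle, enumerate over all transcripts of all corruption patterns and choose the response optimally, we do not worry about efficiency beyond the $\exp(n)$ bound.

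For the tolerance analysis, fix a corrupted edge set $\cE$ with $\abs{\cE}\le \eps\abs{E}$. Call a vertex $v$ bad if the $\eps$-fraction of corrupt edges looks anomalously large when viewed from $v$'s $O(\log n)$-neighborhood — concretely, if more than, say, a $1/100$-fraction of the length-$T$ walks out of $v$ ever traverse a corrupted edge. Spectral expansion (via the Expander Mixing Lemma or a standard random-walk argument) shows that the set of bad vertices has size $O(\eps)\cdot n$; these will be our doomed vertices. For any pair $(a,b)$ of good vertices, the expander guarantees that the majority of the path-copies of $a$'s message that reach $b$ travel only along uncorrupted edges, so the weighted majority recovers $a$'s message correctly. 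This establishes the $(\eps, O(\eps))$-edge-tolerance.

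The main obstacle, and essentially the entire content of the theorem, is showing that such a flooding-with-majority scheme can indeed be decoded correctly from $b$'s transcript when only a small fraction of edges are corrupt — this is where the strong expansion of $G$ is needed quantitatively, as one must argue that the set of paths from $a$ to $b$ that avoid $\cE$ dominates in a robust, majority-preserving way. The round complexity is $O(\log n)$ since information propagates across $G$ in $O(\text{diam}(G)) = O(\log n)$ steps, and the work complexity is $\exp(n)$ because each vertex's per-round circuit has size $\exp(n)$ and there are $O(\log n) \cdot n$ such computations in total, which is absorbed into $\exp(n)$. The result is exactly the statement of~\cite[Theorem 2]{Upfal}, which we therefore invoke as a black box.
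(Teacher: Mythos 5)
The paper's proof of this statement is a direct citation to~\cite[Theorem 2]{Upfal} together with one translation step: Upfal's theorem is stated for \emph{vertex} corruptions, and the paper observes that on a degree-$d$ graph, tolerance to an $\eps$-fraction of corrupted vertices implies tolerance to an $\eps/d$-fraction of corrupted edges (hand all endpoints of corrupted edges to the vertex-corruption adversary), which suffices since $d$ is a universal constant. You also invoke Upfal as a black box, so your endpoint matches, but you omit the vertex-to-edge translation entirely and instead write as though Upfal's Theorem 2 is already the edge-corruption statement; it is not, and the step needs to be noted.

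Separately, the tolerance analysis you sketch before deferring to Upfal is not correct and would not prove the theorem. You declare a vertex $v$ bad if more than a $1/100$-fraction of the length-$T$ walks out of $v$ traverse a corrupted edge, and you claim via expansion that this bad set has measure $O(\eps)$. But with $T = \Theta(\log n)$ and $\eps$ a constant, a random length-$T$ walk in a $d$-regular expander crosses about $\eps T = \Theta(\eps \log n)$ corrupted edges in expectation, so for any fixed $\eps > 0$ and large $n$ essentially every length-$T$ walk from every vertex hits $\cE$, and your bad set is all of $V$, not $O(\eps) n$; likewise, the majority of length-$T$ paths from a good $a$ to a good $b$ do \emph{not} avoid $\cE$, so the weighted-majority decoding you describe fails. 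This is exactly the barrier that confined~\cite{Dwork} to $\eps = O(1/\log n)$ with naive flooding-and-majority; obtaining constant $\eps$ on constant-degree graphs is the entire nontrivial content of Upfal's result, achieved by a different (exponential-work, fault-identification) scheme rather than by a sharper random-walk count. Since you cite Upfal in the end your conclusion is salvageable, but the ``analysis'' you give as motivation is not a proof and, read at face value, is false.
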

Note that~\cite{Upfal}'s theorem is stated for vertex corruptions, but the statement above for edge corruptions follows by noting that any degree $d$ network that is tolerant to $\eps$-fraction of vertex corruptions is also tolerant to $\eps/d$-fraction of edge corruptions, which suffices since $d$ is a fixed constant.

\subsection{Proof of \Cref{thm:main}}\label{sec:perm-final}
We now prove~\Cref{thm:main}, restated below.

\begin{lemma}[\Cref{thm:main} restated]\label{thm:main-restated}
There exists $D\in \N$ such that for all $\eps>0$, for large enough $n$, there exists a $D$-regular graph $G$ with $\Theta(n)$ vertices such that for each permutation $\pi$ on $V(G)$, the graph $G$ admits a set of $(\eps^{32},O(\eps))$-edge-tolerant routing protocols $\mathcal{R}=\{R(u,\pi(u))\}$ with work complexity $\pl n$ and round complexity $\tO(\log n)$. Furthermore, both the graph and protocols can be constructed in time $\poly(n)$. 
\end{lemma}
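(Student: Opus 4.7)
The plan is to apply the composition lemma (\Cref{lem:composition}) three times, successively reducing the degree of the routing network. I start with the polylogarithmic-degree BMV24 network (\Cref{thm:routing-polylog}), compose it twice with smaller copies of the same BMV24 construction, and finally compose once with Upfal's constant-degree network (\Cref{thm:routing-constant}). At each composition the corruption fraction used for the inner graph is carefully calibrated to propagate the target $(\eps^{\text{poly}}, O(\eps))$ tolerance form, while satisfying the hypothesis $\nu_2 \leq c$ of \Cref{lem:composition}.

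Concretely, let $G_0$ be the BMV24 graph on $n$ vertices with degree $\Delta_0 = \pl n$, and let $H_0$ be another BMV24 graph on $\tilde{n}_0 \in [\Delta_0, O(\Delta_0)]$ vertices with degree $\Delta_1 = \pl\log n$. By \Cref{prop:perm-to-all-pairs}, $H_0$ admits all-pairs protocols with work complexity $\pl n$ and $(\alpha, O(\sqrt{\alpha}))$-edge tolerance for any small $\alpha > 0$. Invoking \Cref{lem:composition} (with \Cref{rem:composition} to allow $|V(H_0)| \in [\Delta_0, O(\Delta_0)]$) on $G_0$ with the $(\eps, O(\eps))$-tolerant permutation protocols of~\cite{BMV24}, and on $H_0$ with $\alpha = \Theta(\eps^2)$, I obtain a graph $G_1$ on $\Theta(n\Delta_0)$ vertices with degree $O(\Delta_1)$, work complexity $\pl n$, round complexity $\tO(\log n)$, and $(\Theta(\eps^4), O(\eps))$-edge-tolerance. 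I then repeat with $G_1$ and a BMV24 graph $H_1$ on $\Theta(\deg(G_1))$ vertices, invoked with corruption fraction $\Theta(\eps^8)$, producing $G_2$ with degree $O(\pl\log\log n) \leq \log\log n$ and $(\Theta(\eps^{16}), O(\eps))$-edge-tolerance. Finally, I compose $G_2$ with the Upfal graph $H_2$ from \Cref{thm:routing-constant} on $\Theta(\deg(G_2))$ vertices; since $H_2$ directly provides $(\alpha, O(\alpha))$ all-pairs tolerance with work complexity $\exp(|V(H_2)|) = \exp(\log\log n) = \pl n$, I invoke it with $\alpha = \Theta(\eps^{16})$ so that $\nu_2 = O(\eps^{16})$. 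The final composed graph is $D$-regular on $\Theta(n)$ vertices with $(\Theta(\eps^{32}), O(\eps))$-edge-tolerance, where $D = O(\deg(H_2))$ is an absolute constant.

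The main obstacle is parameter bookkeeping through the three compositions to obtain exactly the $(\eps^{32}, O(\eps))$ form. At each step, \Cref{lem:composition} gives output tolerance bounded by $\min(c, \eps_2^2, (\eps_1 - O(\nu_2))^2)$: the inner corruption $\eps_2$ must be at most the square of the outer tolerance, and $\nu_2$ must be at most a constant fraction of $\eps_1$ so that $(\eps_1 - O(\nu_2))^2$ remains $\Theta(\eps_1^2)$. For the BMV24 inner graphs, \Cref{prop:perm-to-all-pairs} converts $(\eps_2, O(\eps_2))$ permutation tolerance into $(\eps_2, O(\sqrt{\eps_2}))$ all-pairs tolerance, which forces $\eps_2$ to be the square of the current outer tolerance and so shrinks the tolerance by a fourth power per BMV-BMV composition (taking $\eps \to \eps^4 \to \eps^{16}$); the Upfal step contributes one additional squaring ($\eps^{16} \to \eps^{32}$) since its all-pairs tolerance does not incur the square-root loss. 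The doomed-fraction bound $\nu_{\text{out}} \lll \sqrt{\eps_{\text{out}}} + \nu_1 + \nu_2$ remains $O(\eps)$ at each step because $\nu_1 = O(\eps)$ by induction and $\nu_2$ is chosen well below $\eps$. Work complexity stays $\pl n$ throughout since each inner graph has at most $\pl n$ vertices, and round complexity is $O(\log n) \cdot \tO(\log \log n) \cdot \tO(\log\log\log n) \cdot O(\log\log n) = \tO(\log n)$. Polynomial-time constructibility is inherited from each component, yielding \Cref{thm:main-restated}.
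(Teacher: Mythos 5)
Your proposal follows essentially the same composition chain as the paper (BMV24 $\rcirc$ BMV24 $\rcirc$ BMV24 $\rcirc$ Upfal) with the same parameter cascade $\eps \to \eps^4 \to \eps^{16} \to \eps^{32}$ and the same use of \Cref{prop:perm-to-all-pairs} on the inner BMV24 graphs, so this is the paper's argument in substance. The one bookkeeping slip is that starting $G_0$ on $n$ vertices makes the final graph have $\Theta(n\cdot \pl n)$ vertices rather than $\Theta(n)$; the paper fixes this by starting the outer graph on $\Theta(n/\pl n)$ vertices, and you should do the same (this also cleans up the inequality $\exp(|V(H_2)|)\leq \pl n$, which holds because $|V(H_2)|$ is polylogloglog rather than literally $\log\log n$).
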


\begin{proof}
We assume $\eps\leq c$ where $c$ is a small absolute constant, otherwise the statement is vacuously true. Let $G_1$ be the graph from \Cref{thm:routing-polylog} on $\Theta(N)$ vertices ($N\in \N$ to be chosen later) and degree $D_1=\pl N$ that admits an $(\eps,O(\eps))$-edge-tolerant routing protocol with work complexity $\pl N$. Let $G_2$ be another graph from \Cref{thm:routing-polylog} on $N_2$ vertices, for $D_1 \leq N_2 \leq O(D_1)$, and degree $\pl D_1$ that admits a $(\Theta(\eps^2),\Theta(\eps^2))$-edge-tolerant routing protocol with work complexity $\pl D_1$. Applying \Cref{prop:perm-to-all-pairs} for $G_2$, we get a set of protocols between all pairs of vertices of $G_2$, such that if at most $\Theta(\eps^2)$-fraction of the edges of $G_2$ are corrupted, then at most $O(\eps)$-fraction of its vertices are doomed.
Now composing $G_1$ and $G_2$ using \Cref{lem:composition} (in the strengthened setting where $|V(H)|\leq O(D_1)$, see \Cref{rem:composition})  we get a graph $G_3 = G_1 \rcirc G_2$ on $O(ND_1) = N\pl N$ vertices, with degree $\pl D_1 = \pll N$. We also get that for every permutation on $V(G_3)$, there is a set of routing protocols that are $(\eps_3,\nu_3)$-tolerant for $\eps_3 = \Theta(\min(\eps^4, \eps^2))=\Theta(\eps^4)$ and $\nu_3=\Theta(\sqrt{\eps_2}+\eps)=\Theta(\eps)$. The protocols have work complexity $\pl N$ and round complexity $O(\log N\log\log N)$.

Similarly, composing $G_3$ with another graph $G_4$ from \Cref{thm:routing-polylog} on $N_2$ vertices, with $\deg(G_3)\leq N_2 \leq O(\deg(G_3))$, we get a graph $G_5$ on $N\pl N$ vertices, with degree $\text{polylogloglog} N$. We also get that for every permutation on $V(G_5)$, there is a set of routing protocols that are $(\eps_5,\nu_5)$-tolerant for $\eps_5 = \Theta(\eps^{16})$ and $\nu_5=\Theta(\eps)$. The protocols have work complexity $\pl N$ and round complexity $\tO(\log N)$.

Finally, we can compose $G_5$ with a constant degree graph $G_6$ from \Cref{thm:routing-constant} on $N_6$ vertices and degree $D$ (a fixed constant), with $N_6=\deg(G_5)$, to get a graph $G_7$ that has $N\pl N$ vertices, degree $D$ and admits $(\Theta(\eps^{32}),\Theta(\eps))$-edge-tolerant protocols for every permutation on $V(G_7)$. These protocols have work complexity $\pl N \cdot 2^{\text{polylogloglog} N}\leq \pl N$ and round complexity $\tO(\log N)$.

Taking $G=G_7$, $N=n/\pl n$, gives us a constant degree graph on $\Theta(n)$ vertices that has $(\eps^{32},\eps)$-edge-tolerant protocols with work complexity $\pl n$ and round complexity $\tO(\log n)$, for every permutation.
\end{proof}

\subsection{Proof of \Cref{cor:main}}\label{sec:cor-final}
One could apply~\Cref{prop:perm-to-all-pairs} to~\Cref{thm:main-restated} to get a result for the almost-everywhere reliable transmission problem with a constant fraction of doomed nodes, but the work complexity would be $n\pl n$. To get a result with polylogarithmic work complexity we instead use a simple randomized construction of protocols, utilizing the protocols from~\Cref{thm:main-restated} along with a majority.

\begin{corollary}[\Cref{cor:main} restated]\label{cor:main-restated}
There exists $D\in \N$ such that for all $\eps>0$, for large enough $n$, there exists a $D$-regular graph $G$ with $\Theta(n)$ vertices and a set of protocols $\cR=\{R(u,v)\}_{u,v\in G}$ between all pairs of vertices in $G$, with work and round complexity $\pl n$, such that if at most $\eps^{32}$-fraction of edges are corrupted, then at most $O(\eps)$-fraction of vertices in $G$ are doomed.
Furthermore, there is a deterministic algorithm that computes $G$ in time $\poly(n)$ and a randomized algorithm that constructs the protocols in time $\poly(n)$ that satisfy the above tolerance guarantees with probability $1-\exp(-n\pl n)$.  
\end{corollary}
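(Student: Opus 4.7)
The plan is to reduce the $n\cdot\pl n$ work complexity of Proposition~\ref{prop:perm-to-all-pairs} down to $\pl n$ by replacing the broadcast-to-all relay with a majority vote over $k=\pl n$ random intermediates. First, I would apply Lemma~\ref{thm:main-restated} to obtain the constant-degree graph $G$ on $\Theta(n)$ vertices together with the algorithm that, given any permutation on $V(G)$, produces the corresponding $(\eps^{32},O(\eps))$-edge-tolerant permutation-routing family. Next, exactly as in Proposition~\ref{prop:perm-to-all-pairs}, I would decompose the complete graph on $V(G)$ into matchings $\pi_1,\ldots,\pi_m$ and invoke Lemma~\ref{thm:main-restated} on each to obtain a single-pair protocol $R(a,b)$ of work $\pl n$ for every unordered pair $\{a,b\}$. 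For each ordered pair $(u,v)$, I would independently sample $k=\Theta(\pl n)$ uniformly random intermediates $w^{(u,v)}_1,\ldots,w^{(u,v)}_k\in V(G)\setminus\{u,v\}$ and let $R(u,v)$ execute the two-hop transfers $u\to w^{(u,v)}_i\to v$ in parallel, using $R(u,w^{(u,v)}_i)$ and $R(w^{(u,v)}_i,v)$, with $v$ outputting the majority of the $k$ messages received. Since each two-hop relay uses $O(\pl n)$ work and rounds, and the $k$ relays can run in parallel, the final work and round complexities remain $\pl n$.

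Next I would analyze the tolerance. Fix any adversary $\cA$ corrupting at most $\eps^{32}$-fraction of $E(G)$. By Lemma~\ref{thm:main-restated}, for each matching $\pi_j$ the failing-source set $F_j$ satisfies $|F_j|\leq O(\eps)n$. The key observation is that for a uniformly random intermediate $w$, the matching used to route $u\to w$ is uniform over the matchings through $u$, so averaging $\sum_j|F_j|\leq m\cdot O(\eps)n$ gives a per-hop failure probability of $O(\eps)$ on average over $u$. A Markov step then identifies a set $B_{\text{struct}}\subseteq V(G)$ of size $O(\eps)n$ of ``structurally bad'' vertices, outside of which each single hop fails with probability at most a small absolute constant, and hence each two-hop relay fails with probability at most $O(\eps)$. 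Chernoff over the $k$ independent random intermediates then yields per-pair failure probability $\exp(-\Omega(k))=\exp(-\pl n)$ for pairs whose endpoints lie outside $B_{\text{struct}}$.

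To obtain the high-probability guarantee, I would take a union bound over the $n^2$ pairs and over the $2^{O(n)}$ choices of $\eps^{32}$-edge adversary (using that $|E(G)|=O(n)$), which combined with per-pair failure $\exp(-\pl n)$ for $k$ chosen as a sufficiently large polylog shows that with probability $1-\exp(-n\pl n)$, no pair $(u,v)$ with $u,v\notin B_{\text{struct}}$ has its relay fail under any adversary. The doomed set is then contained in $B_{\text{struct}}$ of size $O(\eps)n$, so $S=V(G)\setminus B_{\text{struct}}$ is the desired set of size $(1-O(\eps))n$ in which every pair's protocol $R(u,v)$ communicates reliably. The construction runs in polynomial time since each $R_{\pi_j}$ is constructed in polynomial time by Lemma~\ref{thm:main-restated}, $m=O(n)$ matchings suffice, and the random sampling of intermediates is immediate.

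The main obstacle is controlling the tolerance exponent carefully: a naive averaging-plus-Markov argument would translate Lemma~\ref{thm:main-restated}'s $(\eps^{32},O(\eps))$-tolerance into only $(\eps^{32},O(\sqrt{\eps}))$-tolerance for the all-pairs model (as in Proposition~\ref{prop:perm-to-all-pairs}), which is weaker than the claimed $(\eps^{32},O(\eps))$. To close this gap I would either invoke Lemma~\ref{thm:main-restated} with $\eps$ rescaled to a slightly smaller value so that the square-root loss gets absorbed into the $32$-th power exponent in the hypothesis, or sharpen the concentration of the randomly sampled relays (exploiting the independence of the relay choices across different pairs) to bypass the Markov-induced $\sqrt{\eps}$ loss. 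The remainder of the argument---Chernoff on $k$ independent relays combined with union bounds over pairs and adversaries---is standard.
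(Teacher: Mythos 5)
Your protocol design matches the paper's exactly: route each $(u,v)$ through $\pl n$ random intermediates via the protocols from Proposition~\ref{prop:perm-to-all-pairs}'s matching decomposition, and have $v$ take a majority. The analysis, however, has a genuine gap in the final step. You propose to union-bound over the $n^2$ pairs and the $2^{O(n)}$ choices of adversary, using that each relay (for a pair with endpoints outside $B_{\text{struct}}$) fails with probability $\exp(-\pl n)$ over the randomness of the sampled intermediates. This does not close: $2^{O(n)}\cdot n^2\cdot \exp(-\pl n)$ is enormous, since $\pl n = o(n)$. You simply cannot demand that \emph{every} pair outside $B_{\text{struct}}$ succeeds against \emph{every} adversary with only $\exp(-\pl n)$ per-pair randomness to spend.

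The paper's fix, which you would need, is to weaken the goal: rather than requiring all pairs outside $D_1$ (your $B_{\text{struct}}$) to succeed, one allows additional doomed vertices. Concretely, for a fixed adversary $\cE$, define $D_2$ to be the set of $u\notin D_1$ for which some random set $S_{u,v}$ has too many bad hops, and $D_3$ symmetrically for targets $v$. The per-vertex probability of landing in $D_2$ (or $D_3$) is $\exp(-\pl n)$, and---this is the key point---these events are \emph{independent across different vertices $u$}, because the sets $\{S_{u,v}\}_v$ are disjoint for different $u$. A second Chernoff bound over the $\Theta(n)$ independent indicators then yields $\Pr_\cR[|D_2|\ggg\eps n]\leq \exp(-n\pl n)$, and likewise for $D_3$; this \emph{does} survive the union bound over the $2^{O(n)}$ adversaries. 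The doomed set is then $D_1\cup D_2\cup D_3$, of size $O(\eps n)$ with probability $1-\exp(-n\pl n)$.

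Two smaller remarks. First, the ``main obstacle'' you flag---the $\sqrt{\eps}$ loss---is not actually there if you choose a constant threshold (e.g.\ $1/8$) in the Markov step defining $D_1$: Markov then gives $|D_1|\leq O(\eps)n$ directly, with no square root, since the threshold need not scale with $\eps$. Your first proposed workaround (rescaling $\eps$ to absorb the loss) would also degrade the $\eps^{32}$ exponent, which you don't want. Your second suggested workaround (exploiting independence of relay choices across pairs) is the right intuition, but it is not about sharpening the per-pair concentration; it is the second Chernoff over vertices described above, and it is what rescues both the exponent and the union bound over adversaries. Second, your intermediate claim that ``each two-hop relay fails with probability at most $O(\eps)$'' once endpoints avoid $B_{\text{struct}}$ is off---with a constant threshold in Markov, each two-hop relay fails with probability at most a small constant (say $1/4$), which is all the Chernoff step needs.
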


\begin{proof}
Let $G$ be the graph from~\Cref{thm:main-restated} on $\Theta(n)$ vertices. Assume $n$ is even without loss of generality, and break the edges of the complete graph on $n$ vertices into a union of $n$ matchings. For each matching, apply~\Cref{thm:main-restated} to get a set of routing protocols that are $(\eps^{32},O(\eps))$-tolerant to edge corruptions. This gives us a set of routing protocols $\cR'=\{R'(u,v)\}_{u,v\in G}$ between all pairs of vertices in $G$, such that if at most $\eps^{32}$-fraction of edges are corrupted then at most $O(\eps)$-fraction of protocols $R'(u,v)$ fail. We now show how to upgrade this guarantee to the stronger tolerance guarantee where there are only an $O(\eps)$-fraction of doomed vertices.

For each pair of vertices $(u,v)$, pick a random set of vertices $S_{u,v}$ of size $\pl n$ from $G$. Let $R(u,v)$ be the protocol where $u$ sends its message to each vertex $w$ in $S_{u,v}$ via the protocol $R'(u,w)$, and then every vertex $w$ sends the message it received to $v$ via $R'(w,v)$, and finally $v$ takes a majority vote over all the messages received. Note that the protocol $R(u,v)$ and $R(v,u)$ may be different; this choice helps make the analysis cleaner.

To analyze the tolerance guarantees of $\cR = \{R(u,v)\}$, first fix a set of corrupted edges $\cE$ with measure $\eps^{32}$ in $G$. By the above, we know that at most $O(\eps)$-fraction of the protocols $R'(u,v)$ fail, each of which is called a bad protocol. Let $D_1$ the set of vertices $u$ for which at least $1/8$-fraction of $R'(u,v)$ are bad. Using Markov's inequality we know that the size of $|D_1|\leq O(\eps n)$. 

Fix a vertex $u$ not in $D_1$, and consider the random set $S_{u,v}$ for any $v$. Call this set corrupted if more than $1/4$-fraction of the protocols, $R'(u,w)$ for $w\in S_{u,v}$ are bad. Using a Chernoff bound followed by a union bound over $v$ in $G$, we get that,
\[\Pr_{\cR}[\exists v, S_{u,v}\text{ is corrupted}]\leq \exp(-\pl{n}).\]
Now create a set $D_2$ of vertices $u$ for which the above bad event happens, that is, there exists $v$, such that  $S_{u,v}$ is corrupted.  The above bound says that every vertex not in $D_1$, belongs to $D_2$ with probability at most $\exp(-\pl n)$ and furthermore each of these events is independent. Therefore using the multiplicative Chernoff bound we get that,
\[\Pr_{\cR}[|D_2| \ggg \eps n]\leq \exp(-n\pl{n}).\]
Similarly let $D_3$ be the set of vertices $v\notin D_1$, for which there exists $u$ such that $S_{u,v}$ is corrupted. Using the same analysis as above we get that the size of $D_3$ is also at most $O(\eps n)$ with probability $1-\exp(-n\pl n)$. Taking $D(\cE)$ to be the union of $D_1$, $D_2$ and $D_3$, it is easy to check that every two vertices $w,w'$ outside $D$ can communicate perfectly via $R(w,w')$ and $R(w',w)$ and,
\[\Pr_{\cR}[|D(\cE)| \ggg \eps n]\leq \exp(-n\pl{n}).\]
Finally, we can apply a union bound over all possible sets of adversarial edges $\mathcal{E}$, which are at most $2^{O(n)}$ in number since $G$ is a constant-degree graph. This gives that, with probability at least $1 - \exp(-n\pl n)$, the set of protocols $\mathcal{R}$ constructed as above has at most an $O(\eps)$-fraction of doomed vertices regardless of which $\eps^{32}$-fraction of edges the adversary chooses to corrupt.
\end{proof}

	\section*{Acknowledgments}
	We thank Nikhil Vyas for many helpful discussions, and regret that he declined co-authoring this paper.
	
	\bibliographystyle{alpha}
	\bibliography{references}

\end{document}